\title[Physics-informed Learning for Risk Probability Estimation]{A Generalizable Physics-informed Learning Framework for Risk Probability Estimation}
\author{%
 \Name{Zhuoyuan Wang} \Email{zhuoyuaw@andrew.cmu.edu}\\
 \Name{Yorie Nakahira} \Email{yorie@cmu.edu}\\
 \addr Electrical and Computer Engineering Department, Carnegie Mellon University, Pittsburgh, PA, USA%
}
\newcommand{\ie}{\textit{i.e., }}
\newcommand{\eg}{\textit{e.g., }}
\newcommand{\pr}{\mathbb{P}}
\newcommand{\safe}{\mathcal{C}}
\newcommand{\D}{D}
\begin{document}

\maketitle

\begin{abstract}%
Accurate estimates of long-term risk probabilities and their gradients are critical for many stochastic safe control methods. However, computing such risk probabilities in real-time and in unseen or changing environments is challenging. Monte Carlo (MC) methods cannot accurately evaluate the probabilities and their gradients as an infinitesimal devisor can amplify the sampling noise. In this paper, we develop an efficient method to evaluate the probabilities of long-term risk and their gradients. The proposed method exploits the fact that long-term risk probability satisfies certain partial differential equations (PDEs), which characterize the neighboring relations between the probabilities, to integrate MC methods and physics-informed neural networks. We provide theoretical guarantees of the estimation error given certain choices of training configurations. Numerical results show the proposed method has better sample efficiency, generalizes well to unseen regions, and can adapt to systems with changing parameters. The proposed method can also accurately estimate the gradients of risk probabilities, which enables first- and second-order techniques on risk probabilities to be used for learning and control.
\end{abstract}

\begin{keywords}%
  Stochastic safe control; physics-informed learning; risk probability estimation.%
\end{keywords}

\vspace{0.1in}
\section{Introduction}

Safe control for stochastic systems is important yet a key challenge for deploying autonomous systems in the real world.
In the past decades, many stochastic control methods have been proposed to ensure safety of systems with noises and uncertainties, including stochastic reachabilities~\citep{prandini2008application}, chance-constrained predictive control~\citep{nakka2020chance} and \textit{etc}. Despite the huge amount of stochastic safe control methods, many of them rely on accurate estimates of long-term risk probabilities or their gradients to guarantee long-term safety~\citep{abate2008probabilistic, chapman2019risk, santoyo2021barrier, wang2021myopically}.
To get such accurate estimates is non-trivial, and here we list the challenges.
\vspace{-0.05in}
\begin{itemize}[leftmargin=.2in]
\setlength\itemsep{-0.03in}
    \item \textit{High computational complexity}. Estimating long-term risk is computationally expensive, because the possible state trajectories scale exponentially with regard to the time horizon. Besides, the values of risk probability are often small in safety-critical systems, and thus huge amounts of sample trajectories are needed to capture the rare unsafe event~\citep{janssen2013monte}. 
    
    \item \textit{Sample inefficiency.} Generalization of risk estimation to the full state space is hard to achieve for sample-based methods, since each point of interest requires one separate simulation. The sample complexity increases linearly with respect to the number of points for evaluation. Besides, most of the existing methods require re-evaluation of the risk probability for any changes of system parameters, which further degrades sample efficiency~\citep{zuev2015subset}.

    \item \textit{Lack of direct solutions}. For control affine systems, recent study~\citep{chern2021safe} suggests that the risk probability can be characterized by the solution of partial differential equations (PDEs). It provides an analytical approach to calculate the risk probability, but to solve the time-varying PDE and get the actual probability value is non-trivial. 
    
    \item \textit{Noisy estimation of probability gradients.} Estimating gradients of risk probabilities is difficult, as sampling noise is amplified by the infinitesimal divisor while taking derivatives over the estimated risk probabilities.
  
\end{itemize}
\vspace{-0.3em}
To resolve the abovementioned issues, we propose a physics-informed learning framework PIPE (Physics-Informed Probability Estimator) to estimate risk probabilities in an efficient and generalizable manner. Fig.~\ref{fig:overview diagram} shows the overview diagram of the proposed PIPE framework. The framework takes both data and physics models into consideration, and by combining the two, we achieve better sample efficiency and the ability to generalize to unseen regions in the state space and unknown parameters in the system dynamics. The use of deep neural networks enables the efficient learning of complex PDEs, and the consideration of physics models further enhances efficiency as it allows imperfect noisy data for training. The resulting framework takes only inaccurate sample data in a sparse sub-region of the state space for training, and is able to accurately predict the risk probability over the whole state space for systems with different parameters. 

\begin{figure*}
    \centering
    \includegraphics[width=0.95\textwidth]{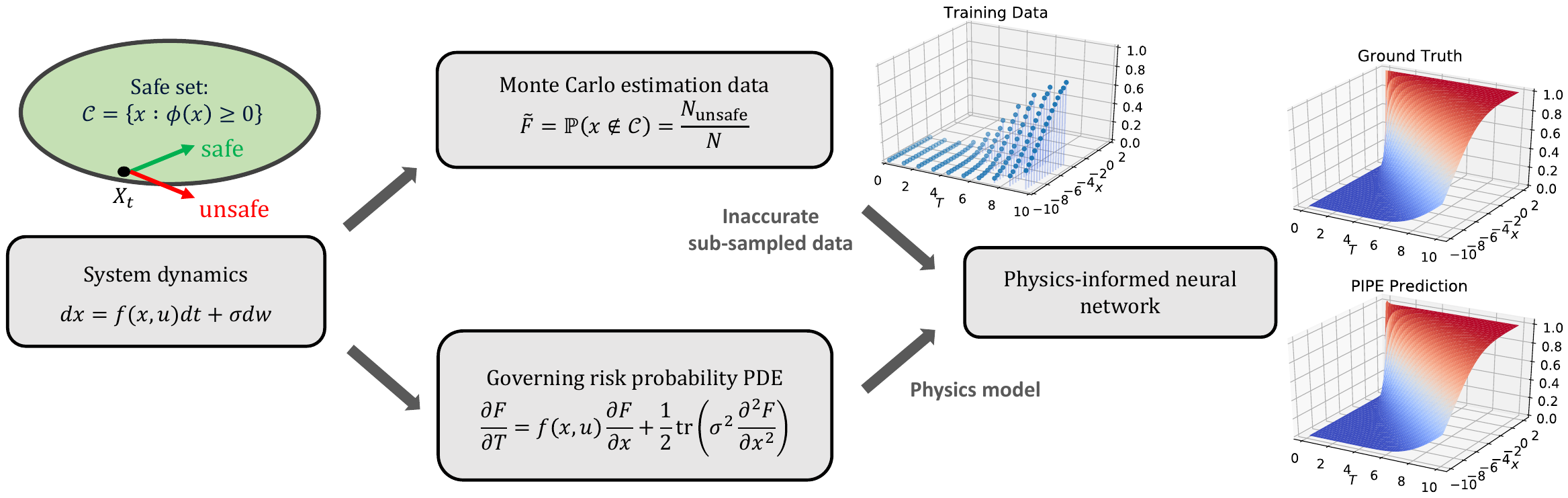}
    \caption{The overview diagram of the proposed PIPE framework. 
    The system takes form~\eqref{eq:cdc system} with safe set defined as~\eqref{eq:safe set definition}. For training data, one can acquire the empirical risk probabilities by simulating the system dynamics and calculating the ratio of unsafe trajectories over all trajectories. For the physics model, we know that the mapping between state-time pair and the risk probability satisfies a governing convection diffusion equation (Theorem~\ref{thm:cdc pde}). The PIPE framework uses physics-informed neural networks to learn the risk probability by fitting the empirical training data and using the physics model as constraints. PIPE gives more accurate and sample efficient risk probability predictions than Monte Carlo or its variants, and can generalize to unseen regions in the state space and unknown parameters in the system dynamics thanks to its integration of data and physics models. 
    }
    \label{fig:overview diagram}
\end{figure*}

\vspace{-0.3em}

\section{Related Works}
\subsection{Stochastic safe control}
Stochastic safe control is a heated topic in recent years, as safety under uncertainties and noises becomes the key challenge of many real-world autonomous systems. Stochastic reachability analysis takes the stochastic dynamics model and forward rollouts the possible trajectories to get the safety probability of any given state, and use this information to design suitable safe controllers~\citep{abate2008probabilistic, prandini2008application, patil2022upper}. 
Conditional Value-at-Risk considers the risk measure of the system and guarantees the expected risk value to always decrease conditioned on the previous states of the system, and thus guarantees safety~\citep{samuelson2018safety, singletary2022safe}.
Chance-constrained predictive control takes probabilistic safety requirements as the constraints in an optimization-based controller, and solves a minimal distance control to a nominal controller to find its safe counterpart~\citep{nakka2020chance, zhu2019chance, pfrommer2022safe}. While these methods provide theoretical guarantees on safety, all of them require accurate estimation of risk probabilities or their gradients to yield desirable performance, and to get accurate estimates itself challenging. We tackle this problem by combining physics models and data to provide accurate estimates of risk probability and its gradient.

\subsection{Rare event simulation}
Rare event simulation considers the problem of estimating the probability of rare events in the system, and is highly related to risk probability estimation because the risk probability is often small in safety-critical systems. Here, we list a few widely adopted approaches for rare event simulation. Standard MC forward runs the system dynamics multiple times to empirically estimate the risk probability by calculating the unsafe trajectory numbers over the total trajectory number~\citep{rubino2009rare}. Standard MC is easy to implement, but needs huge amounts of sample trajectories to get accurate estimation, which becomes impractical when the required accuracy is high. 
Importance sampling methods calculate the risk probability on a shifted new distribution to improve the sample efficiency, but needs good prior information on the re-sampled distribution for reasonable performance enhancement, which is hard to achieve for complex systems~\citep{cerou2012sequential, botev2013markov}. 
Subset simulation calculates the risk probability conditioned on intermediate failure events that are easier to estimate, to further improve sample efficiency~\citep{au2001estimation}. However, computational efficiency remains an issue and generalization to the entire state space is hard to achieve, as the estimation can only be conducted at a single point once~\citep{zuev2015subset}. There are no known methods that can compute the risk probability in an integrated way for the entire state space, and to do so with high sample efficiency. We address this problem by proposing a learning framework that considers both data and model to give generalizable prediction results with high sample efficiency.

\subsection{Physics-informed neural networks}
Physics-informed neural networks (PINNs) are neural networks that are trained to solve supervised learning tasks while respecting any given laws of physics described by general nonlinear partial differential equations~\citep{raissi2019physics}. PINNs take both data and the physics model of the system into account, and are able to solve the forward problem of getting PDE solutions, and the inverse problem of discovering underlying governing PDEs from data. PINNs have been widely used in power systems~\citep{misyris2020physics}, fluid mechanics~\citep{cai2022physics} and medical care~\citep{sahli2020physics}. For stochastic safe control, previous works use PINNs to solve the initial value problem of deep backward stochastic differential equations to derive an end-to-end myopic safe controller~\citep{han2018solving, pereira2021safe}. However, to the best of our knowledge there is no work that considers PINNs for risk probability estimation, especially on the full state-time space scale. In this work, we take the first step towards leveraging PINNs on the problem of risk probability estimation.

\section{Problem Formulation}
\label{sec:problem formulation}

We consider a nonlinear stochastic control system dynamics characterized by the following stochastic differential equation (SDE):
\begin{equation}
\label{eq:cdc system}
    dx = f(x, u)dt + \sigma dw,
\end{equation}
where $x \in \mathbb{R}^n$ is the state, $u \in \mathbb{R}^m$ is the control input, $w_t$ is a $n$-dimensional Wiener process starting from $w_0 = 0$ and $\sigma$ is the magnitude of the noise.
We assume that function $f$ is parameterized by some parameter $\lambda$.
Safety of the system is defined as the state staying within a safe set $\mathcal{C}$, which is the super-level set of a function $\phi(x): \mathbb{R}^n \rightarrow \mathbb{R}$, \ie 
\begin{equation}
\label{eq:safe set definition}
    \mathcal{C} = \{x \mid \phi(x) \geq 0\}.
\end{equation}
This definition of safety can characterize a large variety of practical safety requirements~\citep{prajna2007framework, ames2019control}. For the stochastic system~\eqref{eq:cdc system}, since safety can only be guaranteed in the sense of probability, we consider the long-term safety probability $F_s$ and recovery probability $F_r$ of the system defined as below. 
\begin{definition}[Safety probability]
\label{def:safety probability}
Starting from initial state $x_0 = x \in \safe$, the safety probability $F^s$ of system~\eqref{eq:cdc system} for outlook time horizon $T$ is defined as the probability of state $x_t$ staying in the safe set $\safe$ over the time interval $[0,T]$, i.e., 
\begin{equation}
    F^s(x,T) = \pr(x_t \in \safe, \forall t\in [0,T] \mid x_0 =x).
\end{equation}
\end{definition}
\begin{definition}[Recovery probability]
\label{def:recovery probability}
Starting from initial state $x_0 = x \notin \safe$, the recovery probability $F^r$ of system~\eqref{eq:cdc system} for outlook time horizon $T$ is defined as the probability of state $x_t$ to get back to the safe set during the time interval $[0,T]$, i.e.,
\begin{equation}
    F^r(x,T) = \pr(\exists t \in [0,T], x_t \in \safe \mid x_0 =x).
\end{equation}
\end{definition}
Both safety probability and recovery probability are specific realizations of risk probability, depending on whether the initial state is safe or not and whether the safe or unsafe events are of interest in the system (they are complementary). In the rest of the paper, we will denote risk probability as $F$, and it refers to safety probability or recovery probability depending on the initial state of the system is within the safe set or not.

\begin{remark}
The risk probability defined above characterizes the long-term behaviour of the stochastic system. In many literature, people may consider Gaussian approximation to measure the risk at each time step~\citep{liu2015safe, akametalu2014reachability}. However, if we know the probability of risk at each time step being $\epsilon$, then the probability bound of risk for $k$ time steps will be $1 - (1-\epsilon)^k \rightarrow 1$. This value is very conservative because the derivation does not capture the dynamic relationships between time steps. 
Previous studies also consider approximations of long-term safety, \eg supermartingale~\citep{prajna2007framework}, Chebyshev's inequality~\citep{boucheron2013concentration, farokhi2021safe} and Boole's inequality~\citep{li2021analytical}. Those approximations allows efficient calculations but can be conservative as well.
\end{remark}

The value of the risk probability $F$ over the state space is crucial for safe control of these systems, but in practice it is hard to obtain all risk probability information accurately, \eg the failure probability of manufacturing robot arm is very small and thus hard to estimate~\citep{lasota2014toward}.
In this paper, the goal is to accurately estimate the risk probability and its gradient over the entire state space, and to achieve adaptation in changing system dynamics. Specifically, we want to find the mapping between the state-time pair $(x,T)$ and the risk probability $F$ over the entire state-time space $\mathbb{R}^n \times \mathbb{R}$ for any system parameter $\lambda$, and to estimate the gradient of risk probability $F$ with regard to the state $x$.
We list four specific goals for the problem, generalization to unseen regions in the state-time space, efficient estimation with fixed number of sample data, adaptation on changing system parameters, and accurate estimation of probability gradients.

\section{Proposed Method}
\begin{figure*}[t]
    \centering
    \includegraphics[width=0.85\textwidth]{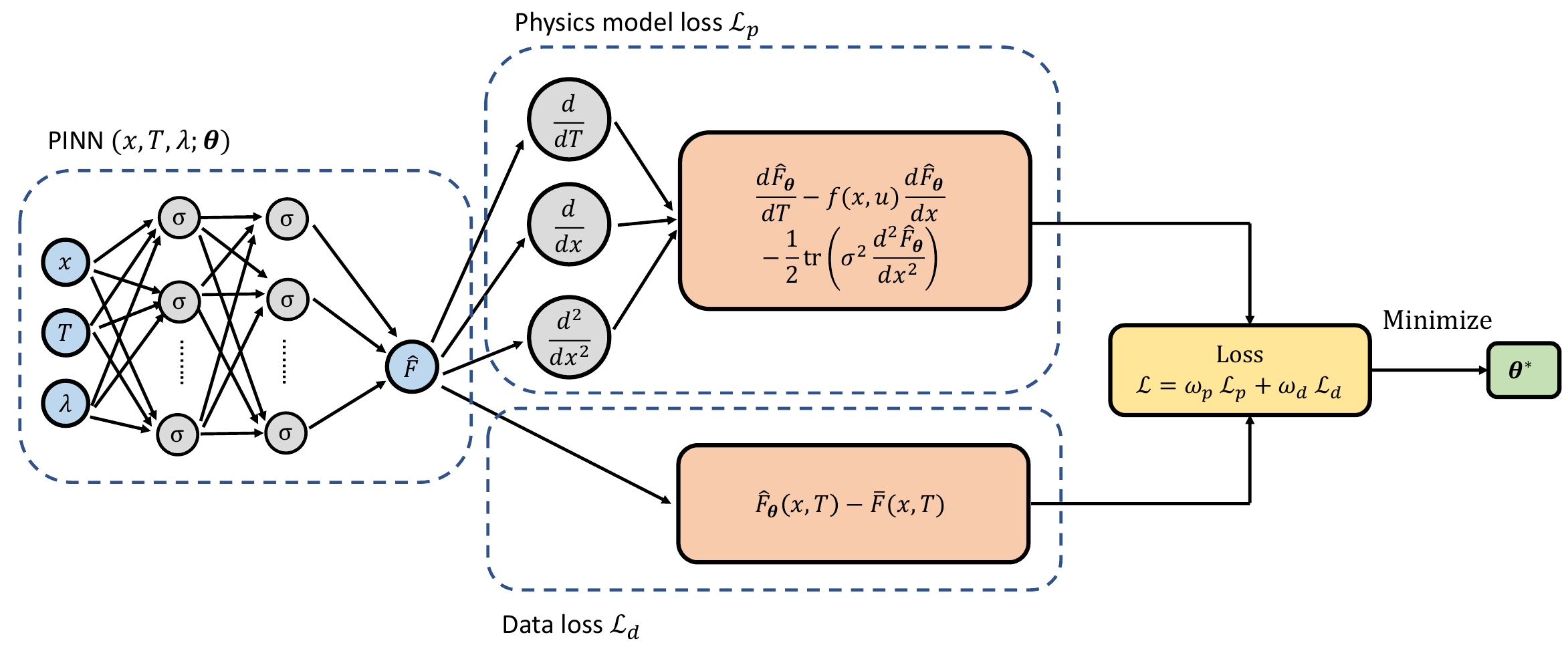}
    \caption{The training scheme of the physics-informed neural network (PINN) in PIPE. 
    }
    \label{fig:PINN diagram}
\end{figure*}


In this section, we propose a model-based data-driven approach, PIPE, to efficiently estimate risk probabilities. We first introduce a PDE whose solution will characterize the risk probability. Then we combine MC data and the PDE model to form a physics-informed neural network to learn the risk probability. The PIPE formulation enjoys advantages from both PDE and MC, and achieves better performance and efficiency than any of the method alone.

\begin{theorem}~\citep{chern2021safe}
\label{thm:cdc pde}
For a stochastic control affine system in the form of~\eqref{eq:cdc system}, the risk probability $F$ (for both safety probability $F^s$ defined in Definition~\ref{def:safety probability} and recovery probability $F^r$ defined in Definition~\ref{def:recovery probability}) is characterized by the following convection diffusion equation,
\begin{equation}
\label{eq:cdc pde}
\begin{aligned}
     W_F(x,T) &:= \frac{\partial F}{\partial T}(x, T) - f(x, u) \frac{\partial F}{\partial x}(x, T) -\frac{1}{2} \operatorname{tr}\left(\sigma^2 \frac{\partial^{2} F}{\partial x^{2}}(x, T) \right) = 0,
\end{aligned}
\end{equation}
with initial condition $F(x,0) = \mathbbm{1}(x \in \safe)$.
For safety probability, the boundary condition is $F^s(x,T) = 0 ,\; x \notin \safe$.
For recovery probability, the boundary condition is $F^r(x,T) = 1 ,\; x \in \partial \safe$.
\end{theorem}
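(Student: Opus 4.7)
The plan is to derive the PDE as a Kolmogorov-backward / Feynman--Kac equation by combining the Markov property of the diffusion~\eqref{eq:cdc system} with It\^o's formula. I will treat the safety probability $F^s$ in detail; the recovery case is analogous after exchanging the roles of $\mathcal{C}$ and its complement.

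First, I would fix $x$ in the interior of $\mathcal{C}$ and write the survival-type event $A_s := \{x_r \in \mathcal{C}, \forall r \in [0,s]\}$. By the strong Markov property of the SDE, for any $0 \le s \le T$,
\begin{equation*}
F^s(x, T) \;=\; \mathbb{E}_x\!\left[\mathbbm{1}_{A_s}\, F^s(x_s,\, T-s)\right].
\end{equation*}
Assuming enough regularity on $F^s$ (say $F^s \in C^{1,2}$ on $\mathcal{C}^\circ \times (0,T]$, which is the main technical hypothesis I would isolate), I would apply It\^o's formula to $F^s(x_s, T-s)$ for $s$ small:
\begin{equation*}
F^s(x_s, T-s) \;=\; F^s(x,T) + \int_0^s \!\left(-\tfrac{\partial F^s}{\partial T} + f(x_r,u)\tfrac{\partial F^s}{\partial x} + \tfrac{1}{2}\operatorname{tr}\!\bigl(\sigma^2 \tfrac{\partial^2 F^s}{\partial x^2}\bigr)\right)\!dr \;+\; \int_0^s \tfrac{\partial F^s}{\partial x}\sigma\, dw_r.
\end{equation*}
Since $x$ lies strictly inside $\mathcal{C}$, the exit time $\tau := \inf\{r : x_r \notin \mathcal{C}\}$ satisfies $\mathbb{P}_x(\tau > s) \to 1$ as $s \to 0$, so $\mathbbm{1}_{A_s} \to 1$ almost surely and the contribution of $\{\tau \le s\}$ is $o(s)$. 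Taking expectations, the stochastic integral vanishes; dividing by $s$ and letting $s \downarrow 0$ yields exactly $W_{F^s}(x,T) = 0$.

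The initial condition is immediate from the definition: if $x_0 = x \in \mathcal{C}$ then $A_0$ holds, and if $x \notin \mathcal{C}$ then $A_0$ fails, giving $F^s(x,0) = \mathbbm{1}(x \in \mathcal{C})$. For the boundary condition, if $x \notin \mathcal{C}$ then the event $A_T$ already fails at $r=0$, so $F^s(x,T) = 0$; for recovery, $x \in \partial\mathcal{C}$ means $\phi(x) = 0 \ge 0$, so $x \in \mathcal{C}$ already at $t=0$, giving $F^r(x,T) = 1$ trivially.

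The main obstacle is the regularity assumption. A priori $F^s$ need only be measurable; establishing $C^{1,2}$ smoothness inside $\mathcal{C}$ requires either a nondegeneracy / hypoellipticity condition on $\sigma$ (so that parabolic interior regularity theory applies to the PDE), or an approximation argument through smooth mollifications of $\mathbbm{1}_{\mathcal{C}}$ combined with a Feynman--Kac representation. A rigorous proof would need to carefully handle the discontinuity of the initial datum at $\partial \mathcal{C}$ and the behavior of trajectories that graze the boundary; this is where the cited argument in~\citep{chern2021safe} would be invoked, but in principle one can follow the standard treatment of exit-probability PDEs (e.g., via a vanishing-viscosity / approximation scheme or a direct martingale characterization using the optional stopping theorem applied to $F^s(x_{t\wedge \tau}, T - t\wedge\tau)$).
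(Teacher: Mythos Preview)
Your derivation is sound and is essentially the standard backward-Kolmogorov argument for first-exit / first-passage probabilities: conditioning on a short time increment via the Markov property, expanding with It\^o's formula, and passing to the limit $s\downarrow 0$ yields the generator equation, while the initial and boundary conditions follow directly from the definitions. The regularity caveat you flag (needing $F\in C^{1,2}$ in the interior, which in turn requires nondegeneracy of $\sigma$ or an approximation scheme) is exactly the right technical point to isolate.

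However, note that the paper does \emph{not} contain its own proof of this theorem: it is stated with a citation to~\cite{chern2021safe} and is used as a black-box ingredient for the PIPE framework. All the proofs in the paper's appendix concern the PINN error bounds (Corollary~\ref{cor:NN_worst_case_bound}, Theorem~\ref{thm:full_pde_constraint}, and the supporting results), not the derivation of the PDE itself. So there is no ``paper's own proof'' to compare against; your proposal simply supplies what the paper outsources to the cited reference, and does so along the expected lines.
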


Theorem~\ref{thm:cdc pde} states that the risk probability of a control system can be analytically expressed as a PDE. The PDE consists of a convection term and a diffusion term. The convection term characterizes how the risk probability changes as the system evolves under its deterministic part of the dynamics. The diffusion term characterizes the effect of stochastic noises on the risk probability value as the noise term in the dynamics diffuses with time.
The initial condition says when the outlook time $T$ is $0$, the risk probability value is the indicator function of whether the state is within the safe set. The boundary condition says that on the boundary of safe set $\mathcal{C}$, the risk probability $F$ is $0$ if we consider safety probability, and is $1$ if we consider recovery probability. 
We use $W_F(x,T)$ to denote the function value that the PDE risk probability $F$ should satisfy, to better define the loss function in the learning framework later.

While the PDE provides a way to get the actual risk probability of the system, to solve a PDE using numerical techniques is not easy in general, especially when the coefficients are time varying as in the case of~\eqref{eq:cdc pde}.
MC methods provide another way to solve this problem. Assume the dynamics of the system is given, one can simulate the system for an initial condition multiple times to get an empirical estimate of the risk probability by calculating the ratio of unsafe trajectories over all trajectories. However, 
MC requires huge number of trajectories to get accurate estimation, and the evaluation of the risk probability can only be conducted at a single point at a time.

To leverage the advantages of PDE and MC and to overcome their drawbacks, we propose to use physics-informed neural networks (PINNs) to learn the mapping from the state and time horizon to the risk probability value $F$. Fig.~\ref{fig:PINN diagram} shows the architecture of the PINN. The PINN takes the state-time pair $(x,T)$ and the system parameter $\lambda$ as the input, and outputs the risk probability prediction $\hat{F}$, the state and time derivatives $ \frac{\partial \hat F}{\partial x}$ and $ \frac{\partial \hat F}{\partial T}$, and the Hessian $ \frac{\partial^2 \hat F}{\partial x^2}$, which come naturally from the automatic differentiation in deep learning frameworks such as PyTorch~\citep{paszke2019pytorch} and TensorFlow~\citep{abadi2016tensorflow}. Unlike standard PINN, we add the system parameter $\lambda$ as an input to achieve adaptations on varying system parameters.
Assume the PINN is parameterized by $\boldsymbol\theta$, the loss function is defined as
\begin{equation}
\label{eq:PINN overall loss function}
    \mathcal{L}(\boldsymbol\theta) = \omega_p \mathcal{L}_p(\boldsymbol\theta) + \omega_d \mathcal{L}_d(\boldsymbol\theta),
\end{equation}
where
\begin{equation}
\label{eq:PINN loss functions}
\begin{aligned}
    \mathcal{L}_p(\boldsymbol\theta) & = \frac{1}{|\mathcal{P}|} \sum_{(x,T) \in \mathcal{P}} \|W_{\hat{F}_{\boldsymbol\theta}}(x,T)\|_2^2, \\
    \mathcal{L}_d(\boldsymbol\theta) & = \frac{1}{|\mathcal{D}|} \sum_{(x,T) \in \mathcal{D}} \|\hat{F}_{\boldsymbol\theta}(x,T) - \bar{F}(x,T)\|_2^2.
\end{aligned}
\end{equation}
Here, $\bar{F}$ is the training data, $\hat{F}_{\boldsymbol\theta}$ is the prediction from the PINN, $\mathcal{P}$ and $\mathcal{D}$ are the training point sets for the physics model and external data, respectively. The loss function $\mathcal{L}$ consists of two parts, the physics model loss $\mathcal{L}_p$ and data loss $\mathcal{L}_d$. The physics model loss $\mathcal{L}_p$ measures the satisfaction of the PDE constraints for the learned output. It calculates the actual PDE equation value $W_{\hat{F}_{\boldsymbol\theta}}$, which is supposed to be $0$, and use its 2-norm as the loss. The data loss $\mathcal{L}_d$ measures the accuracy of the prediction of PINN on the training data. It calculates the mean square error between the PINN prediction and the training data point as the loss. The overall loss function $\mathcal{L}$ is the weighted sum of the physics model loss and data loss with weighting coefficients $\omega_p$ and $\omega_d$.

The resulting PIPE framework combines MC data and the governing PDE into a PINN to learn the risk probability. The advantages of the PIPE framework include fast inference at test time, accurate estimation, and ability to generalize from the combination of data and model.

\vspace{-.05in}
\section{Performance Analysis}
In this section, we provide performance analysis of PIPE. We first show that for standard neural networks (NNs) without physics model constraints, it is fundamentally difficult to estimate the risk probability of a longer time horizon than those generated from sampled trajectories. We then show that with the PINN, we are able to estimate the risk probability at any state for any time horizon with bounded error.
Let $\Omega$ be the state space, $\tau = [0,T_H]$ be the time domain, $\Sigma=(\partial \Omega \times[0, T_H]) \cup(\Omega \times\{0\})$ be the boundary of the space-time domain.
Denote $\D:=\Omega \times \tau$ for notation simplicity and denote $\Bar{\D}$ be the interior of $\D$.


\setlength{\belowdisplayskip}{4pt} \setlength{\belowdisplayshortskip}{4pt}
\setlength{\abovedisplayskip}{4pt} \setlength{\abovedisplayshortskip}{4pt}

\vspace{-.05in}
\begin{corollary}
\label{cor:NN_worst_case_bound}
Suppose that $\D \in \mathbb{R}^{d+1}$ is a bounded domain, $u \in C^0(\bar{\D}) \cap C^2(\D)$ is the solution to the PDE of interest, and $\Tilde{u}(x,T), (x,T) \in \Sigma$ is the boundary condition. 
Let $\Sigma_s$ be a strict sub-region in $\Sigma$, and $\D_s$ be a strict sub-region in $\D$.
Consider a neural network ${F}_{\boldsymbol\theta}$ that is parameterized by 
$\boldsymbol\theta$ and has sufficient representation capabilities. For $\forall M > 0$, there can exist $\Bar{\boldsymbol\theta}$ that satisfies both of the following conditions simultaneously: 
\vspace{-0.02in}
\begin{enumerate}
\setlength\itemsep{-0.02in}
    \item 
    $\sup _{(x, T) \in  \Sigma_s}|{F}_{\Bar{\boldsymbol\theta}}(x, T)-\Tilde{u}(x, T)|<\delta_1$
    \item
    $\sup _{(x, T) \in \D_s}|{F}_{\Bar{\boldsymbol\theta}}(x, T)-{u}(x, T) |<\delta_2$
\end{enumerate}
\vspace{-0.02in}
and
\begin{equation}
\label{eq:PINN_worst_case_bound}
    \sup _{(x, T) \in \D}\left|{F}_{\Bar{\boldsymbol\theta}}(x, T)-u(x, T)\right| \geq M.
\end{equation}   
\end{corollary}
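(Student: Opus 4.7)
The plan is to argue by constructing an explicit target function that coincides with $u$ on the given sub-regions but deviates from $u$ by more than $M$ elsewhere in $\D$, and then invoke the universal approximation property of neural networks (the ``sufficient representation capabilities'' hypothesis) to obtain $\Bar{\boldsymbol\theta}$. The underlying intuition is that fitting a neural network only on strict sub-regions $\Sigma_s \subsetneq \Sigma$ and $\D_s \subsetneq \D$ leaves the behavior on the complement entirely unconstrained, so no pointwise global bound can follow from these conditions alone.

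First I would exploit the strict inclusions: since $\Sigma_s$ and $\D_s$ are strict sub-regions, the set $\D \setminus (\overline{\Sigma_s} \cup \overline{\D_s})$ has nonempty interior, so pick an open ball $B \subset \D$ disjoint from both $\Sigma_s$ and $\D_s$. Using a standard smooth bump construction (for instance a mollified indicator), define $h \in C^\infty(\overline{\D})$ such that $h \equiv 0$ on a neighborhood of $\Sigma_s \cup \D_s$ and $\sup_{(x,T)\in B} h(x,T) \geq M + 1$. Set the target function
\begin{equation*}
g(x,T) := u(x,T) + h(x,T), \qquad (x,T) \in \overline{\D}.
\end{equation*}
By construction, $g \equiv u$ on $\Sigma_s \cup \D_s$, and since $u = \Tilde{u}$ on $\Sigma \supset \Sigma_s$ we also have $g \equiv \Tilde{u}$ on $\Sigma_s$. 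At the same time, $\sup_{\D} |g - u| \geq M + 1$.

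Next I would invoke universal approximation. Because $g$ is continuous on the compact set $\overline{\D}$, and the network class has sufficient representation capacity, for any $\varepsilon := \min(\delta_1, \delta_2, 1) > 0$ there exists a parameter $\Bar{\boldsymbol\theta}$ such that $\sup_{(x,T) \in \overline{\D}} |{F}_{\Bar{\boldsymbol\theta}}(x,T) - g(x,T)| < \varepsilon$. The triangle inequality then yields
\begin{equation*}
\sup_{(x,T) \in \Sigma_s} |{F}_{\Bar{\boldsymbol\theta}}(x,T) - \Tilde{u}(x,T)| < \varepsilon \leq \delta_1, \quad \sup_{(x,T) \in \D_s} |{F}_{\Bar{\boldsymbol\theta}}(x,T) - u(x,T)| < \varepsilon \leq \delta_2,
\end{equation*}
which are conditions 1 and 2. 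For the global lower bound, pick $(x^\ast, T^\ast) \in B$ where $h(x^\ast,T^\ast) \geq M + 1$; then
\begin{equation*}
|{F}_{\Bar{\boldsymbol\theta}}(x^\ast,T^\ast) - u(x^\ast,T^\ast)| \geq |h(x^\ast,T^\ast)| - |{F}_{\Bar{\boldsymbol\theta}}(x^\ast,T^\ast) - g(x^\ast,T^\ast)| \geq (M+1) - \varepsilon \geq M,
\end{equation*}
which gives \eqref{eq:PINN_worst_case_bound}.

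The main obstacle is a purely technical one: constructing $h$ with the right support properties. If $\Sigma_s$ and $\D_s$ are arbitrary Borel sub-regions, one needs to be careful that their closures are still strictly contained in $\Sigma$ and $\D$ respectively so that the open ball $B$ exists; this follows from interpreting ``strict sub-region'' as a set whose closure is a proper subset, which is the natural reading in this context. Once $h$ is in hand, the rest is a direct application of universal approximation and is essentially mechanical.
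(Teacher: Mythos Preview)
Your proposal is correct and follows essentially the same strategy as the paper: construct a continuous target function that agrees with $u$ on the sub-regions $\Sigma_s$ and $\D_s$ but deviates by more than $M$ somewhere in $\D\setminus(\Sigma_s\cup\D_s)$, then invoke universal approximation to realize it as some $F_{\Bar{\boldsymbol\theta}}$. The only cosmetic difference is that the paper perturbs $u$ by a distance-to-$\D_s$ ramp $\tfrac{d}{d_{\max}}(M+\delta)$ rather than your compactly supported smooth bump; your construction is arguably cleaner since it explicitly zeroes out the perturbation on a neighborhood of both $\Sigma_s$ and $\D_s$, which makes the verification of conditions 1 and 2 immediate.
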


\begin{theorem}
\label{thm:full_pde_constraint}
Suppose that $\D \in \mathbb{R}^{d+1}$ is a bounded domain, $u \in C^0(\bar{\D}) \cap C^2(\D)$ is the solution to the PDE of interest, and $\Tilde{u}(x,T), (x,T) \in \Sigma$ is the boundary condition. Let ${F}_{\boldsymbol\theta}$ denote a PINN parameterized by 
$\boldsymbol\theta$. If the following conditions holds:
\begin{enumerate}
\setlength\itemsep{-0.02in}
    \item $\mathbb{E}_{\mathbf{Y}}\left[|{F}_{\boldsymbol\theta}(\mathbf{Y})-\Tilde{u}(\mathbf{Y})|\right]<\delta_1$, where $\mathbf{Y}$ is uniformly sampled from $\Sigma$
    \item $\mathbb{E}_{\mathbf{X}}\left[|W_{{F}_{\boldsymbol\theta}}(\mathbf{X})|\right]<\delta_2$, where $\mathbf{X}$ is uniformly sampled from $\D$
    \item 
    $F_{\boldsymbol\theta}$, $W_{{F}_{\boldsymbol\theta}}$, $u$ are $\frac{l}{2}$ Lipshitz continuous on $\D$.
\end{enumerate} 
Then the error of ${F}_{\boldsymbol\theta}$ over $\D$ is bounded by
\begin{equation}
\label{eq:pde_constraint_bound}
    \sup _{(x,T) \in \D}\left|{F}_{\boldsymbol\theta}(x, T)-u(x, T)\right| \leq \tilde \delta_1 + C \frac{\tilde \delta_2}{\sigma^2}
\end{equation}
where $C$ is a constant depending on $\D$, $\Sigma$ and $W$, and
\begin{equation}
\begin{aligned}
    \tilde \delta_1 & = \max \left\{\frac{2 \delta_1 |\Sigma| }{R_{\Sigma}|\Sigma|}, 2 l \cdot\left(\frac{\delta_1 |\Sigma| \cdot \Gamma(d / 2+1)}{l R_{\Sigma} \cdot \pi^{d / 2}}\right)^{\frac{1}{d+1}}\right\}, \\
    \tilde \delta_2 & = \max \left\{\frac{2\delta_2 |\D|}{R_{\D}|\D|}, 2 l \cdot\left(\frac{\delta_2 |\D| \cdot \Gamma((d+1) / 2+1)}{l R_{\D} \cdot \pi^{(d+1) / 2}}\right)^{\frac{1}{d+2}}\right\},
\end{aligned}
\end{equation}
with $R_{(\cdot)}$ being the regularity of $(\cdot)$, $\|(\cdot)\|$ is the Lebesgue measure
of a set $(\cdot)$ and $\Gamma$ is the Gamma function.
\end{theorem}


See Appendix~\ref{appdx:proofs} for the proofs.
Corollary~\ref{cor:NN_worst_case_bound} says that standard NN can give arbitrarily inaccurate prediction due to insufficient physical constraints. This explains why risk estimation problems cannot be handled solely on fitting sampled data. Theorem~\ref{thm:full_pde_constraint} says that when the PDE constraint is imposed on the full space-time domain, the prediction of the PINN has bounded error.



\vspace{-0.1in}
\section{Experiments}
\label{sec:experiments}
We conduct four experiments to illustrate the efficacy of the proposed method. The system dynamics of interest is~\eqref{eq:cdc system}
with $x \in \mathbb{R}$, $f(x) = \lambda \: dt$, $g(x)=0$ and $\sigma=1$. The system dynamics become
\vspace{-0.01in}
\begin{equation}
\label{eq:experiment system dynamics}
    dx = \lambda \: dt + dw.
\end{equation}
The safe set is defined as~\eqref{eq:safe set definition} with $\phi(x) = x-2$. The state-time region of interest is $\Omega \times \tau = [-10,2] \times [0,10]$.
For risk probability, we consider the recovery probability of the system from initial state $x_0 \notin \mathcal{C}$ outside the safe set. Specifically, the risk probability $F$ is characterized by the solution of the following convection diffusion equation
\begin{equation}
\label{eq:cdc recovery pde}
\begin{aligned}
    \frac{\partial F}{\partial T}(x, T) 
    & = \lambda \frac{\partial F}{\partial x}(x, T) + \frac{1}{2} \operatorname{tr}\left( \frac{\partial^{2} F}{\partial x^{2}}(x, T) \right),
\end{aligned}
\end{equation}
with initial condition $F(x,0) = \mathbbm{1}(x \geq 2)$ and boundary condition $F(2,T) = 1$.
We choose this system because we have the analytical solution of~\eqref{eq:cdc recovery pde} as ground truth for comparison, as given by 
$F(x,T) = \int_0^T \frac{(2-x)}{\sqrt{2 \pi t^{3}}} \exp \left(-\frac{\left((2-x)-\lambda t\right)^2}{2 t}\right) dt$.
The empirical data of the risk probability is acquired by running MC with the system dynamics~\eqref{eq:cdc system} with initial state $x=x_0$ multiple times, and calculate the number of trajectories where the state recovers to the safe set during the time horizon $[0,T]$ over the full trajectory number, \ie $\bar{F}(x,T) = \pr(\exists t \in [0,T], x_t \in \mathcal{C} \mid x_0 = x) = \frac{N_\text{recovery}}{N},$
where $N$ is the number of sample trajectories and is a tunable parameter that affects the accuracy of the estimated risk probability. Specifically, larger $N$ gives more accurate estimation.

In all experiments, we use PINN with 3 hidden layers and 32 neurons per layer. The activation function is chosen as hyperbolic tangent function ($\tanh$). We use Adam optimizer~\citep{kingma2014adam} for training with initial learning rate set as $0.001$. The PINN parameters $\boldsymbol\theta$ is initialized via Glorot uniform initialization. The weights in the loss function~\eqref{eq:PINN overall loss function} are set to be $\omega_p = \omega_d = 1$. We train the PINN for 60000 epochs in all experiments. The simulation is constructed based on the DeepXDE framework~\citep{lu2021deepxde}. 
Details about the experiments, simulation results on other systems, and applications to stochastic safe control can be found in the appendix.
Codes are available at: \href{https://github.com/jacobwang925/PIPE-L4DC}{https://github.com/jacobwang925/PIPE-L4DC}.

\begin{figure}
    \centering
    \includegraphics[width=0.9\textwidth]{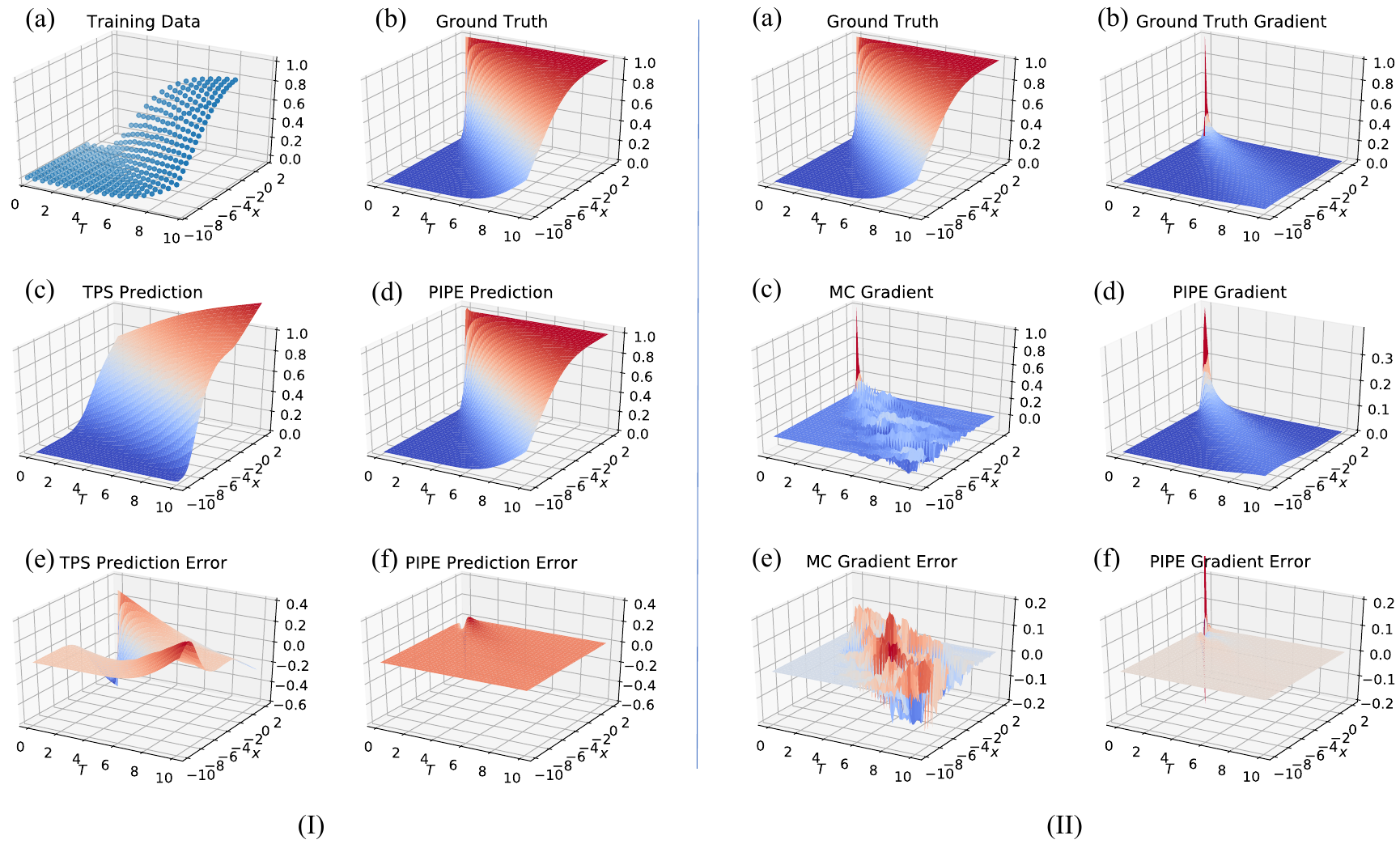}
    \caption{(I) Settings and results of the risk probability generalization task. PIPE and TPS fitting are compared. The average absolute error of prediction is $9.2 \times 10^{-2}$ for TPS, and $0.3 \times 10^{-2}$ for PIPE. (II) Gradient of the risk probability prediction of PIPE and MC. The average absolute error of gradient prediction is $2.78 \times 10^{-2}$ for MC, and $0.06 \times 10^{-2}$ for PIPE.}
    \label{fig:generalization}
\end{figure}


\subsection{Generalization to unseen regions}
\label{sec:generalization}
In this experiment, we test the generalization ability of PIPE to unseen regions of the state-time space. We consider system~\eqref{eq:experiment system dynamics} with $\lambda=1$. We train the PINN with data only on the sub-region of the state-time space $\Omega \times \tau = [-10,-2] \times [0,10]$, but test the trained PINN on the full state-time region $\Omega \times \tau = [-10,2] \times [0,10]$.
The training data is acquired through MC with sample trajectory number $N = 1000$, and is down-sampled to $dx = 0.4$ and $dT = 0.5$. For comparison, we use thin plate spline (TPS) fitting on the training data to infer the risk probability on the whole state space. We also examined other fitting methods such as cubic spline and polynomial fitting, but TPS performs the best over all fitting strategies. Fig.~\ref{fig:generalization} (I) visualizes the training data samples and shows the results. The spline fitting does not include any physical model constraint, thus fails to generalize to unseen regions in the state space. On the contrary, PIPE can infer the risk probability value very accurately on the whole state space due to its combination of data and the physics model. 

\vspace{-0.05in}
\subsection{Efficient estimation of risk probability}
\label{sec:estimation}
In this experiment, we show that PIPE will give more efficient estimations of risk probability in terms of accuracy and sample number compared to MC and its variants. We consider system~\eqref{eq:experiment system dynamics} with $\lambda=1$. The training data is sampled on the state-time space $\Omega \times \tau = [-10,2] \times [0,10]$ with $dx = 0.2$ and $dT=0.1$.
We compare the risk probability estimation error of PIPE and MC, on two regions in the state-time space:
\vspace{-0.06in}
\begin{enumerate}
\setlength\itemsep{-0.01in}
    \item Normal event region: $\Omega \times \tau = [-6,-2] \times [4,6]$, where the average probability is $0.412$.
    \item Rare event region: $\Omega \times \tau = [-2,0] \times [8,10]$, where the average probability is $0.985$.
\end{enumerate}
\vspace{-0.06in}
For fairer comparison, we use a uniform filter of kernel size $3$ on the MC data to smooth out the noise, as the main cause of inaccuracy of MC estimation is sampling noise. Fig.~\ref{fig:PINN MC log plot} shows the percentage errors of risk probability inference under different MC sample numbers $N$.
As the sample number goes up, prediction errors for all three approaches decrease. The denoised MC has lower error compared to standard MC as a result of denoising, and their errors tend to converge since the sampling noise contributes less to the error as the sample number increases. On both rare events and normal events, PIPE yields more accurate estimation than MC and denoised MC across all sample numbers. This indicates that PIPE has better sample efficiency than MC and its variants, as it requires less sample data to achieve the same prediction accuracy.
This desired feature of PIPE is due to the fact that it incorporates model knowledge into the MC data to further enhance its accuracy by taking the physics-informed neighboring relationships of the data into consideration.
\begin{figure}
    \centering
    \includegraphics[width=15cm]{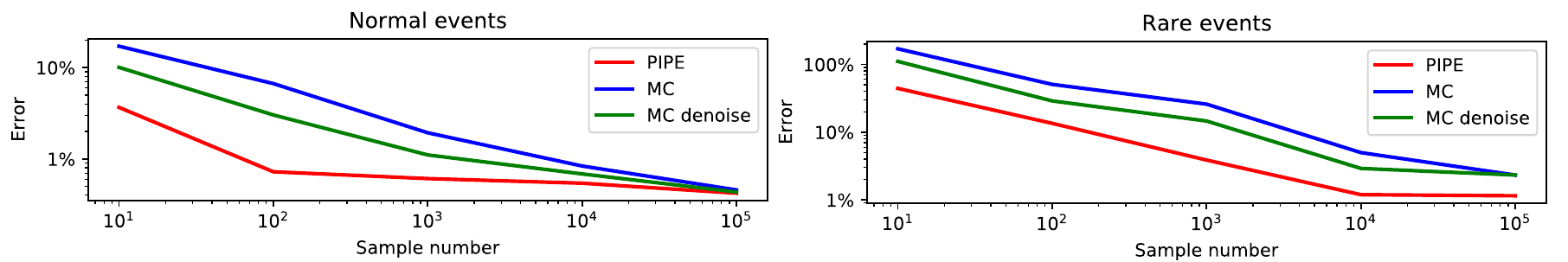}
    \caption{Percentage error of risk probability estimation for different MC sample numbers for rare events and normal events. PIPE, MC and denoised MC with uniform kernel filtering are compared. Both error and sample number are in log scale.}
    \label{fig:PINN MC log plot}
\end{figure}
\begin{figure}
    \centering
    \includegraphics[width=11cm]{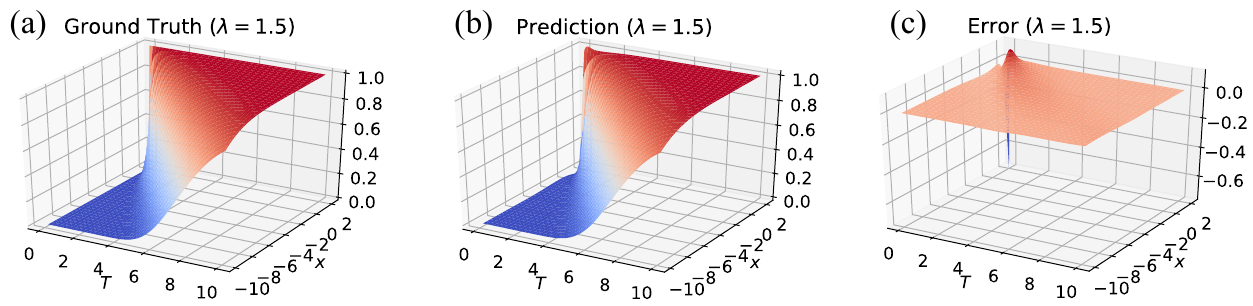}
    \caption{Risk probability prediction of PIPE on unseen system parameters. The average absolute error of the prediction is $0.70 \times 10^{-2}$.}
    \label{fig:varying parameter}
\end{figure}

\vspace{-0.05in}
\subsection{Adaptation on changing system parameters}
\label{sec:adaptation}
In this experiment, we show that PIPE will allow generalization to uncertain parameters of the system. We consider system~\eqref{eq:experiment system dynamics} with varying $\lambda\in [0,2]$. We use MC data with sample number $N=10000$ for a fixed set of $\lambda_\text{train} = [0.1,0.5,0.8,1]$ for training, and test PIPE after training on $\lambda_\text{test} = [0.3, 0.7, 1.2, 1.5, 2]$. We only present $\lambda_\text{test} = 1.5$ due to space limitation, but similar results on different $\lambda_\text{test}$ can be found at the project webpage. Fig.~\ref{fig:varying parameter} shows the results.
We can see that PIPE is able to predict risk probability for systems with unseen and even out of distribution parameters during training. In the prediction error plot, the only region that has larger prediction error is at $T=0$ and $x \in \partial \mathcal{C}$ on the boundary of the safe set. This is because the risk probability at this point is not well defined (it can be either $0$ or $1$), and this point will not be considered in a control scenario as we are always interested in long-term safety where $T \gg 0$. This adaptation feature of the PIPE framework indicates its potential use on stochastic safe control with uncertain system parameters, and it also opens the door for physics-informed learning on a family of PDEs. In general, PDEs with different parameters can have qualitatively different behaviors, so is hard to generalize. The control theory model allows us to have a sense when the PDEs are qualitatively similar with different parameters, and thus allows generalization within the qualitatively similar cases. 

\vspace{-0.06in}
\subsection{Estimating the gradient of risk probability}
\label{sec:gradient}
In this experiment, we show that PIPE is able to generate accurate gradient predictions of risk probabilities. We consider system~\eqref{eq:experiment system dynamics} with $\lambda=1$. Similar to the generalization task, we train the PINN with MC data of $N=1000$ on the sub-region $\Omega \times \tau = [-10,-2] \times [0,10]$ and test the trained PINN on the full state-time region $\Omega \times \tau = [-10,2] \times [0,10]$. We then take the finite difference of the risk probability with regard to the state $x$ to calculate its gradient, for ground truth $F$, MC estimation $\bar{F}$ and PIPE prediction $\hat{F}_{\boldsymbol\theta}$. Fig.~\ref{fig:generalization} (II) shows the results. It can be seen that PIPE gives much more accurate estimation of the risk probability gradient than MC, and this is due to the fact that PIPE incorporates physics model information inside the training process. It is also interesting that PIPE does not use any governing laws of the risk probability gradient during training, and by considering the risk probability PDE alone, it can provide very accurate estimation of the gradient. The results indicates that PIPE can enable the usage a lot of first- and higher-order stochastic safe control methods online, by providing accurate and fast estimation of the risk probability gradients.

\vspace{-0.09in}
\section{Conclusion}
In this paper, we proposed a generalizable physics-informed learning framework, PIPE, to estimate the risk probability in stochastic safe control systems. The PIPE framework combines data from Monte Carlo and the underlying governing PDE of the risk probability, to accurately learn the risk probability as well as its gradient. PIPE has better sample efficiencies compared to MC, and is able to generalize to unseen regions in the state space beyond training. The resulting PIPE framework is also robust to uncertain parameters in the system dynamics, and can infer risk probability values of a class of systems with training data only from a fixed number of systems. The proposed PIPE framework provides key foundations for first- and higher-order methods for stochastic safe control, and opens the door for robust physics-informed learning for generic PDEs. Future work includes applications to high-dimensional and real-world systems.



\bibliography{l4dc_2023/main}

\newpage

\appendix

\setlength{\belowdisplayskip}{7pt} \setlength{\belowdisplayshortskip}{7pt}
\setlength{\abovedisplayskip}{7pt} \setlength{\abovedisplayshortskip}{7pt}

\section{Proof of Theorems}
\label{appdx:proofs}

In this section we provide proofs of the corollary and theorems listed in this paper. 

\begin{proof}(Corollary~\ref{cor:NN_worst_case_bound})
We know that $u$ is the solution to the PDE of interest. 
We can construct $\Bar{u}$ such that
\begin{equation}
    \bar{u} = \begin{cases}
    u, \quad (x,T) \in \D_s \\
    u + \frac{d}{d_{\text{max}}}(M+\delta), \quad (x,T) \in \Omega \times \tau \backslash \D_s
    \end{cases}
\end{equation}
where $d$ characterizes the distance between $(x,t) \in \Omega \backslash \D_s$ and set $\D_s$, $d_{\text{max}} = \max_{(x,t) \in \Omega \backslash \D_s} d$ is the maximum distance, and $0 < \delta < \min\{\delta_1, \delta_2\}$ is a constant. Then we have
\begin{equation}
    \sup_{(x, T) \in \Omega \times \tau} \left|\bar{u}(x, t)-u(x, t)\right| = M + \delta
\end{equation}
Since we assume the neural network has sufficient representation power, by universal approximation theorem~\citep{cybenko1989approximation}, for $\delta$ given above, there exist $\Bar{\boldsymbol\theta}$ such that
\begin{equation}
    \sup _{(x, T) \in \Omega \times \tau}\left|F_{\Bar{\boldsymbol\theta}}(x, t)-\bar{u}(x, t)\right| \leq \delta.
\end{equation}
Then we have $\Bar{\boldsymbol\theta}$ satisfies 
\begin{equation}
\begin{aligned}
\sup _{(x, T) \in  \Sigma_s}|{F}_{\Bar{\boldsymbol\theta}}(x, T)-\Tilde{u}(x, T)|<\delta_1, \\
\sup _{(x, T) \in \D_s}|{F}_{\Bar{\boldsymbol\theta}}(x, T)-{u}(x, T) |<\delta_2,
\end{aligned}
\end{equation}
and 
\begin{equation}
    \sup _{(x, T) \in \Omega \times \tau}\left|F_{\Bar{\boldsymbol\theta}}(x, t)-u(x, t)\right| \geq M.
\end{equation}

\end{proof}

The proof of Theorem~\ref{thm:full_pde_constraint} is based on~\citep{peng2020accelerating, gilbarg1977elliptic} by adapting the results on elliptic PDEs to parabolic PDEs. We first give some supporting definitions and lemmas.
We define the second order parabolic operator $L$ w.r.t. $u$ as follows.
\begin{equation}
\label{eq:parabolic_operator}
    L[u]:= -\frac{\partial u}{\partial T} + \sum_{i, j} a_{i, j}(x, T) \frac{\partial^2 u}{\partial x_i x_j}+\sum_i b_i(x, T) \frac{\partial u}{\partial x_i}+c(x, T) u.
\end{equation}
Let $A(x):=\left[a_{i, j}(x, T)\right]_{i, j} \in \mathbb{R}^{d \times d}$ and $b(x, T):=\left[b_i(x)\right]_i \in \mathbb{R}^d$. We consider uniform parabolic operators, where $A$ is positive definite, with the smallest eigenvalue $\lambda > 0$. 

Let $\Omega \times \tau \in \mathbb{R}^{d+1}$ be a bounded domain of interest, where $\Sigma$ is the boundary of $\Omega \times \tau$. We consider the following partial differential equation for the function $u(\cdot)$.
\begin{equation}
\label{eq:parabolic_PDE}
\begin{aligned}
    L[u](x, T) & = q(x, T), \quad (x, T) \in \Omega \times \tau \\
    u(x,T) & = \tilde{u}(x, T), \quad (x, T) \in \Sigma
\end{aligned}
\end{equation}
We know that the risk probability PDE~\eqref{eq:cdc pde} is a parabolic PDE and can be written in the form of~\eqref{eq:parabolic_PDE}.
Specifically, in our case we have $c(x,T) \equiv 0$
and $A = \frac{1}{2}\sigma^2 I$ which gives $\lambda = \frac{1}{2}\sigma^2$.

For any space $\Omega \in \mathbb{R}^d$, for any function $f: \mathbb{R}^d \rightarrow \mathbb{R}$, we define the $L_1$ norm of the function $f$ on $\Omega$ to be
\begin{equation}
    \|f\|_{L_1 (\Omega)} := \int_\Omega f(\mathbf{X}) d\mathbf{X}.
\end{equation}
With this definition, we know that 
\begin{equation}   \mathbb{E}_{\mathbf{X}}\left[f\right] =  \|f\|_{L_1 (\Omega)} / |\Omega|,
\end{equation}
where $\mathbf{X}$ is uniformly sampled from $\Omega$, and $|\Omega|$ denote the size of the space $\Omega$.

\begin{corollary}
\label{cor:maximum_principle}
\textbf{Weak maximum principle}. Suppose that $\D = \Omega \times \tau$ is bounded, $L$ is uniformly parabolic with $c \leq 0$ and that $u \in C^0(\bar{\D}) \cap C^2(\D)$ satisfies $Lu \geq 0$ in $\D$, and $M = \max_{\Bar{\D}} u \geq 0$. Then
\begin{equation}
    \max_{\Bar{\D}} u = \max_{\Sigma} u.
\end{equation}
\end{corollary}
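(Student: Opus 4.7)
The plan is to follow the classical two-step strategy used for elliptic operators in Gilbarg--Trudinger, adapted to the parabolic setting of~\eqref{eq:parabolic_operator}. I would first establish a \emph{strict} version of the principle, namely that if $L[v] > 0$ on $\D$ with $c \le 0$ and $\max_{\bar\D} v \ge 0$, then the maximum of $v$ cannot be attained at any interior point of $\D$ nor on the top cap $\Omega \times \{T_H\}$, so it must lie on $\Sigma$. Then I would recover the non-strict statement $L[u] \ge 0$ by adding a small perturbation that bumps the inequality up to strict, and let the perturbation vanish.

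For the strict step, suppose for contradiction that $v$ attains its maximum $M_v \ge 0$ at some $(x_0,T_0) \in \bar\D \setminus \Sigma$. Then $x_0 \in \Omega$ and $T_0 \in (0,T_H]$. The first-order conditions in $x$ give $\nabla_x v(x_0,T_0) = 0$, and the second-order condition gives that $D^2_x v(x_0,T_0)$ is negative semidefinite, so by uniform parabolicity of $A$ and the fact that $A$ and $-D^2_x v$ are both symmetric positive semidefinite, $\sum_{i,j} a_{ij}\partial_{ij} v \le 0$ at $(x_0,T_0)$. For the time derivative, if $T_0 < T_H$ it vanishes, and if $T_0 = T_H$ then the difference quotient from below forces $\partial_T v(x_0,T_0) \ge 0$; in either case $-\partial_T v \le 0$. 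Combined with $c(x_0,T_0)\, v(x_0,T_0) \le 0$ (using $c\le 0$ and $v=M_v\ge 0$), this yields $L[v](x_0,T_0) \le 0$, contradicting $L[v]>0$.

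For the non-strict step, I would construct an auxiliary function $\psi(x) = e^{\alpha x_1}$ with $\alpha$ large; then $L[\psi] = (\alpha^2 a_{11} + \alpha b_1 + c)\,e^{\alpha x_1}$, and since $a_{11}\ge \lambda >0$ and $b_1, c$ are bounded on $\bar\D$, one may choose $\alpha$ so that $L[\psi] \ge \gamma > 0$ uniformly. Setting $v_\varepsilon := u + \varepsilon \psi$ gives $L[v_\varepsilon] = L[u] + \varepsilon L[\psi] \ge \varepsilon \gamma > 0$, and $\max_{\bar\D} v_\varepsilon \ge \max_{\bar\D} u = M \ge 0$. Applying the strict case yields $\max_{\bar\D} v_\varepsilon = \max_\Sigma v_\varepsilon$, and sending $\varepsilon \downarrow 0$ (using continuity of $u$ and $\psi$ on $\bar\D$ and boundedness of $\psi$) gives $\max_{\bar\D} u \le \max_\Sigma u$. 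The reverse inequality is automatic since $\Sigma \subset \bar\D$, completing the proof.

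The main obstacle I expect is the sign bookkeeping at the top cap $\Omega \times \{T_H\}$: this set is inside $\bar\D$ but not part of the parabolic boundary $\Sigma$, so one has to argue separately that an interior-in-$x$ maximum at $T_0 = T_H$ is also excluded. This is what the one-sided derivative argument $\partial_T v(x_0,T_H) \ge 0$ accomplishes, and it is the only place where the asymmetric structure of $\Sigma$ (as opposed to the full topological boundary) really matters. A secondary technical point is ensuring that the hypothesis $M \ge 0$ is actually used only to neutralize the $cu$ term; if $c \equiv 0$, as in the risk-probability PDE~\eqref{eq:cdc pde}, this assumption can be dropped entirely, which I would note as a remark so the corollary applies cleanly to the setting needed for Theorem~\ref{thm:full_pde_constraint}.
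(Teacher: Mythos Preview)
Your argument is the standard two-step proof of the weak maximum principle for parabolic operators and is correct as written. The paper, however, does not actually supply a proof of this corollary: it is listed among the ``supporting definitions and lemmas'' imported from \citet{gilbarg1977elliptic} (adapted to the parabolic case) and then immediately used in the proof of Theorem~\ref{thm:parabolic_PDE_bound}. So there is no paper-side argument to compare against; you have filled in what the authors cite rather than prove.

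One small remark on the top-cap step: the hypothesis is $u\in C^2(\D)$, and with $\D=\Omega\times[0,T_H]$ this already gives enough regularity for the one-sided time-derivative argument at $T_0=T_H$, so the contradiction goes through as you wrote it. If one wanted to work under the weaker assumption $u\in C^2$ only on the open cylinder, the cleaner route is to apply your strict step on $\Omega\times(0,T_H')$ for $T_H'<T_H$ and then let $T_H'\uparrow T_H$ using continuity; this avoids any regularity question at the terminal slice and is the version most textbooks record. Your observation that the sign hypothesis $M\ge 0$ is needed only to kill the $cu$ term, and is vacuous when $c\equiv 0$ as in~\eqref{eq:cdc pde}, is also correct and worth keeping as a remark.
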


\begin{corollary}
\label{cor:comparison_principle}
\textbf{Comparison principle.} Suppose that $\D$ is bounded and $L$ is uniformly parabolic.
If $u, v \in C^0(\bar{\D}) \cap C^2(\D)$ satisfies $Lu \leq Lv$ in $\D$ and $u \geq v$ on $\Sigma$, then $u \geq v$ on $\D$.
\end{corollary}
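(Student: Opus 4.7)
The plan is to reduce Corollary~\ref{cor:comparison_principle} to the weak maximum principle (Corollary~\ref{cor:maximum_principle}) by applying it to the difference $w := v - u$. The payoff of this reduction is that the comparison principle is a two-function inequality, whereas the maximum principle is a one-function statement, so it suffices to show $w \leq 0$ on $\D$.

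First, I would establish the relevant properties of $w$. By linearity of the parabolic operator $L$ defined in~\eqref{eq:parabolic_operator}, the hypothesis $Lu \leq Lv$ on $\D$ gives $Lw = Lv - Lu \geq 0$ on $\D$, and the hypothesis $u \geq v$ on $\Sigma$ gives $w = v - u \leq 0$ on $\Sigma$. Moreover, $w \in C^0(\bar{\D}) \cap C^2(\D)$ since both $u$ and $v$ lie in that space, so $w$ satisfies all of the regularity hypotheses of Corollary~\ref{cor:maximum_principle}. The sign condition $c \leq 0$ on the zeroth-order coefficient is inherited from the assumption on $L$, since the same operator is applied to both $u$ and $v$.

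Next, I would split into two cases based on the sign of $M := \max_{\bar{\D}} w$, since the weak maximum principle as stated requires $M \geq 0$. If $M < 0$, then $w < 0$ throughout $\bar{\D}$, and in particular $w \leq 0$ on $\D$. If $M \geq 0$, then Corollary~\ref{cor:maximum_principle} applies directly to $w$ and yields
\begin{equation*}
\max_{\bar{\D}} w \;=\; \max_{\Sigma} w \;\leq\; 0,
\end{equation*}
where the last inequality uses the boundary bound established above. In either case, $w \leq 0$ on $\D$, i.e., $u \geq v$ on $\D$, which is the desired conclusion.

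The main obstacle, if any, is purely bookkeeping: one must verify that $w$ inherits the sign assumption $c \leq 0$ and handle the $M < 0$ case separately so that the hypotheses of Corollary~\ref{cor:maximum_principle} are actually met. Both steps are routine, and no deeper analytical machinery beyond the maximum principle itself is required.
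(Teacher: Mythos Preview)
Your argument is correct and is the standard derivation of the comparison principle from the weak maximum principle: set $w = v - u$, use linearity of $L$ to get $Lw \geq 0$, and invoke Corollary~\ref{cor:maximum_principle} (handling the trivial $M<0$ case separately) to conclude $w \leq 0$ on $\D$. The paper itself does not supply a proof of Corollary~\ref{cor:comparison_principle}; it is stated as a known result and then used in the proof of Theorem~\ref{thm:parabolic_PDE_bound}, so there is no alternative argument to compare against.

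One small remark: the hypothesis $c \leq 0$ is not explicitly listed in the statement of Corollary~\ref{cor:comparison_principle}, even though Corollary~\ref{cor:maximum_principle} requires it. Your sentence ``the sign condition $c \leq 0$ \ldots is inherited from the assumption on $L$, since the same operator is applied to both $u$ and $v$'' does not quite justify this---applying the same operator to both functions says nothing about the sign of $c$. In the paper's setting this is a non-issue because the risk-probability PDE~\eqref{eq:cdc pde} has $c \equiv 0$ (as noted below~\eqref{eq:parabolic_PDE}), so the hypothesis is satisfied trivially; you should simply state that $c \leq 0$ is an implicit standing assumption rather than something derived.
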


\begin{theorem}
\label{thm:parabolic_PDE_bound}
Let $Lu = q$ in a bounded domain $\D$, where $L$ is parabolic, $c \leq 0$ and $u \in C^0(\bar{\D}) \cap C^2(\D)$. Then
\begin{equation}
    \sup_\D |u| \leq \sup_{\Sigma} |u| + C \sup_\D \frac{|q|}{\lambda},
\end{equation}
where $C$ is a constant depending only on diam $\D$ and $\beta = \sup |b|/\lambda$.
\end{theorem}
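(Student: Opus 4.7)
The plan is to follow the classical Gilbarg--Trudinger strategy for maximum principle estimates, adapted to the parabolic operator $L$ in~\eqref{eq:parabolic_operator}, by constructing a time-independent barrier function and then invoking the comparison principle (Corollary~\ref{cor:comparison_principle}). The key observation is that since $v$ in the barrier will depend only on the spatial variables, the extra $-\partial_T$ term in $L$ contributes nothing when applied to $v$, so the construction is essentially the same as in the elliptic case.

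First, after a translation, I would assume without loss of generality that $\D \subset \{0 < x_1 < d\}$ where $d = \operatorname{diam} \D$. Then I would introduce the barrier
\begin{equation}
    v(x) = e^{\alpha d} - e^{\alpha x_1}, \qquad \alpha := \beta + 1,
\end{equation}
which is nonnegative on $\D$. A direct computation using $\partial_T v = 0$, uniform parabolicity ($a_{11}(x,T) \geq \lambda$), the sign hypothesis $c \leq 0$ combined with $v \geq 0$, and the definition $\beta = \sup |b|/\lambda$, gives
\begin{equation}
    Lv \;\leq\; -\lambda\, \alpha^2 e^{\alpha x_1} + \lambda \beta\, \alpha e^{\alpha x_1} \;=\; -\lambda\, \alpha\, e^{\alpha x_1}(\alpha - \beta) \;\leq\; -\lambda,
\end{equation}
where the final inequality uses $\alpha - \beta = 1$ and $e^{\alpha x_1} \geq 1$ on $\D$.

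Next, define the comparison function
\begin{equation}
    w(x,T) \;:=\; \sup_\Sigma |u| \;+\; \Bigl(\sup_\D \tfrac{|q|}{\lambda}\Bigr)\, v(x).
\end{equation}
On the parabolic boundary $\Sigma$ we clearly have $w \geq \sup_\Sigma |u| \geq |u|$, so $w \geq \pm u$ there. In the interior, using $c \leq 0$ and the bound on $Lv$ above,
\begin{equation}
    Lw \;\leq\; \Bigl(\sup_\D \tfrac{|q|}{\lambda}\Bigr)\, Lv \;\leq\; -\sup_\D |q| \;\leq\; \pm q \;=\; L(\pm u).
\end{equation}
Applying the comparison principle (Corollary~\ref{cor:comparison_principle}) to the pairs $(w, u)$ and $(w, -u)$ yields $|u| \leq w$ throughout $\D$. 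Taking the supremum and using $\sup_\D v \leq e^{\alpha d} - 1$ gives the desired estimate with $C = e^{(\beta+1)d} - 1$, which depends only on $\operatorname{diam} \D$ and $\beta$.

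The only genuinely nontrivial step is verifying that the barrier $v$ indeed yields $Lv \leq -\lambda$ under uniform parabolicity, since this is what converts the sup-norm of the source $q$ (scaled by $1/\lambda$) into a pointwise bound; every other step is a direct application of the comparison principle already established. One should also double-check that the construction survives the adaptation from elliptic to parabolic, but since the barrier has no time dependence and $c \leq 0$ together with $v \geq 0$ removes the zeroth-order term, the verification is immediate.
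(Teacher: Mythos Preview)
Your proposal is correct and follows essentially the same Gilbarg--Trudinger barrier argument as the paper: place $\D$ in a slab $0<x_1<d$, use the time-independent barrier $e^{\alpha d}-e^{\alpha x_1}$ with $\alpha=\beta+1$ to obtain $Lv\leq-\lambda$, and then apply the comparison principle (Corollary~\ref{cor:comparison_principle}) to deduce the bound with $C=e^{\alpha d}-1$. The only cosmetic difference is that the paper splits into the cases $Lu\geq q$ and $Lu\leq q$ using $u^+$ and $q^-$, whereas you apply the comparison principle simultaneously to $\pm u$ with $|u|$ and $|q|$; both routes yield the same constant and the same dependence on $\operatorname{diam}\D$ and $\beta$.
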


\begin{proof}
(Theorem~\ref{thm:parabolic_PDE_bound})
Let $\D$ lie in the slab $0 < x_1 < d$. Without loss of generality, we assume $\lambda_1 \geq \lambda > 0$. Set $L_0 = -\frac{\partial}{\partial t} + a^{ij}\frac{\partial^2}{\partial x_i x_j} + b^i\frac{\partial}{\partial x_i}$. For $\alpha \geq \beta + 1$, we have
\begin{equation}
    L_0 e^{\alpha x_1} = (\alpha^2 a^{11} + \alpha b^1) e^{\alpha x_1} \geq \lambda (\alpha^2 - \alpha \beta) e^{\alpha x_1} \geq \lambda
\end{equation}
Consider the case when $Lu \geq q$. Let
\begin{equation}
    v = \sup_{\Sigma} u^+ + (e^{\alpha d} - e^{\alpha x_1})\sup_\D \frac{|q^-|}{\lambda},
\end{equation}
where $u^+ = \max (u, 0)$ and $q^- = \min(q, 0)$.
Then, since $Lv = L_0 v + cv \leq -\lambda \sup_{\Sigma} (|q^-|/\lambda)$ by maximum principle (Corollary~\ref{cor:maximum_principle}), we have
\begin{equation}
    L(v-u) \leq -\lambda (\sup_{\D} \frac{|q^-|}{\lambda} + \frac{q}{\lambda}) \leq 0 \text{ in } \D,
\end{equation}
and $v-u \geq 0$ on $\Sigma$. Hence, from comparison principle (Corollary~\ref{cor:comparison_principle}) we have
\begin{equation}
    \sup_{\D} u \leq \sup_{\D} v \leq \sup_{\Sigma} |u| + C \sup_\D \frac{|q|}{\lambda}
\end{equation}
with $C = e^{\alpha d} - 1$.
Consider $Lu \leq q$, we can get similar results with flipped signs. Combine both cases we have for $Lu=q$
\begin{equation}
    \sup_\D |u| \leq \sup_{\Sigma} |u| + C \sup_\D \frac{|q|}{\lambda}.
\end{equation}
\end{proof}

\begin{theorem}
\label{thm:PINN_bound}
Suppose that $\D \in \mathbb{R}^{d+1}$ is a bounded domain, $L$ is the parabolic operator defined in~\eqref{eq:parabolic_operator} and $u \in C^0(\bar{\D}) \cap C^2(\D)$ is a solution to the safety probability PDE. If the PINN ${F}_{\boldsymbol\theta}$ satisfies the following conditions:
\begin{enumerate}
    \item $\sup _{(x, T) \in \Sigma}| {F}_{\boldsymbol\theta}(x, T)-\Tilde{u}(x, T)|<\delta_1$;
    \item $\sup _{(x, T) \in \D}|W_{{F}_{\boldsymbol\theta}}(x,T)|<\delta_2$;
    \item ${F}_{\boldsymbol\theta} \in C^0(\bar{\D}) \cap C^2(\D)$,
\end{enumerate}
Then the error of $F_{\boldsymbol\theta}$ over $\Omega$ is bounded by
\begin{equation}
\label{eq:PINN_bound}
    \sup _{(x, T) \in \D}\left|{F}_{\boldsymbol\theta}(x, T)-u(x, t)\right| \leq {\delta_1}+C \frac{{\delta_2}}{\lambda}
\end{equation}
where $C$ is a constant depending on $\D$ and $L$.
\end{theorem}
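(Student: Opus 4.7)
The plan is to reduce Theorem~\ref{thm:PINN_bound} to the a priori bound of Theorem~\ref{thm:parabolic_PDE_bound} by applying that bound to the error function $v := F_{\boldsymbol\theta} - u$. The key observation is that the PDE residual operator $W_{(\cdot)}$ coincides (up to a sign) with the parabolic operator $L$ in~\eqref{eq:parabolic_operator}: for the convection-diffusion equation~\eqref{eq:cdc pde}, one can read off $a_{i,j} = \tfrac{1}{2}\sigma^{2}_{i,j}$, $b_i = f_i(x,u)$, $c \equiv 0$, and $q \equiv 0$, so that $L[F] = -W_F$.

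First I would check the hypotheses needed to invoke Theorem~\ref{thm:parabolic_PDE_bound}. Condition (3) gives $F_{\boldsymbol\theta} \in C^{0}(\bar{\D}) \cap C^{2}(\D)$, and by assumption $u$ is in the same class, so the difference $v := F_{\boldsymbol\theta} - u$ also lies in $C^{0}(\bar{\D}) \cap C^{2}(\D)$. Since $L$ is linear, $L[v] = L[F_{\boldsymbol\theta}] - L[u]$. Because $u$ solves the risk-probability PDE, $L[u] = -W_{u} = 0$, while $L[F_{\boldsymbol\theta}] = -W_{F_{\boldsymbol\theta}}$. Therefore $L[v] = -W_{F_{\boldsymbol\theta}}$, and condition (2) yields the pointwise bound $\sup_{\D}|L[v]| < \delta_{2}$. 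On the parabolic boundary $\Sigma$, condition (1) gives $\sup_{\Sigma}|v| = \sup_{\Sigma}|F_{\boldsymbol\theta} - \tilde{u}| < \delta_{1}$.

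Now I would apply Theorem~\ref{thm:parabolic_PDE_bound} to $v$ with right-hand side $q := L[v]$. Since $c \equiv 0 \leq 0$ and $L$ is uniformly parabolic with smallest eigenvalue $\lambda > 0$, the theorem yields
\begin{equation}
\sup_{\D} |v| \;\leq\; \sup_{\Sigma}|v| \;+\; C\,\sup_{\D}\frac{|L[v]|}{\lambda} \;\leq\; \delta_{1} + C\,\frac{\delta_{2}}{\lambda},
\end{equation}
with $C$ depending only on $\operatorname{diam}\D$ and $\beta = \sup|b|/\lambda$, which in turn depend only on $\D$ and the coefficients of $L$. Substituting $v = F_{\boldsymbol\theta} - u$ gives precisely the bound~\eqref{eq:PINN_bound}.

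The only nontrivial step is verifying that the risk-probability PDE genuinely fits the uniformly parabolic template with $c \leq 0$ so that Corollaries~\ref{cor:maximum_principle} and~\ref{cor:comparison_principle} (and hence Theorem~\ref{thm:parabolic_PDE_bound}) apply; once that is observed, the proof is a straightforward linearity-plus-maximum-principle argument. I expect the main subtlety to be notational: one must be careful that the $\lambda$ appearing in~\eqref{eq:PINN_bound} is the ellipticity constant of the diffusion coefficient $A = \tfrac{1}{2}\sigma^{2}$ (i.e., its smallest eigenvalue), not the system-dynamics parameter also denoted $\lambda$ earlier in the paper.
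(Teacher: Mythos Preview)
Your proposal is correct and follows essentially the same route as the paper: define the error $v = F_{\boldsymbol\theta} - u$, use linearity of $L$ together with $L[u]=0$ to identify $L[v]$ with the PDE residual $-W_{F_{\boldsymbol\theta}}$, and then invoke Theorem~\ref{thm:parabolic_PDE_bound}. If anything, your version is more explicit than the paper's (which simply sets $h_1 = L[F_{\boldsymbol\theta}] - q$, $h_2 = F_{\boldsymbol\theta} - u$ and applies Theorem~\ref{thm:parabolic_PDE_bound} directly), and your remark about the two distinct uses of the symbol $\lambda$ is a valid clarification.
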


\begin{proof}
(Theorem~\ref{thm:PINN_bound})
Denote $h_1 = L\left[{F}_{\boldsymbol\theta}\right] - q$, and $h_2 = {F}_{\boldsymbol\theta} - u$. Since ${F}_{\boldsymbol\theta}$ and $u$ both fall in $C^0(\bar{\D}) \cap C^2(\D)$, from Theorem~\ref{thm:parabolic_PDE_bound} we have
\begin{equation}
\begin{aligned}
    \sup _{\D}\left|h_2(x, t)\right| & \leq \sup_{\Sigma} |h_2(x, t)| + C \sup_\D \frac{|h_1(x, t)|}{\lambda} \\
    & \leq {\delta_1}+C \frac{{\delta_2}}{\lambda} 
\end{aligned}
\end{equation}
which gives~\eqref{eq:PINN_bound}.
\end{proof}

\begin{lemma}
\label{lem:Lipshitz_bound}
Let $\D \subset \mathbb{R}^{d+1}$ be a domain. Define the regularity of $\D$ as

\begin{equation}
    R_{\D}:=\inf _{(x,T) \in \D, r>0} \frac{|B(x, T,  r) \cap \D|}{\min \left\{|\D|, \frac{\pi^{(d+1) / 2} r^{d+1}}{\Gamma((d+1) / 2+1)}\right\}},
\end{equation}
where $B(x, T, r):=\left\{y \in \mathbb{R}^{d+1} \mid\|y-(x,T)\| \leq r\right\}$ and $|S|$ is the Lebesgue measure of a set $S$. Suppose that $\D$ is bounded and $R_{\D}>0$. Let $q \in C^{0}(\bar{\D})$ be an $l_{0}$-Lipschitz continuous function on $\bar{\D}$. Then

\begin{equation}
\label{eq:Lipshitz_bound}
\sup _{\D}|q| \leq \max \left\{\frac{2\|q\|_{L_{1}(\bar{\D})}}{R_{\D}|\D|}, 2 l_{0} \cdot\left(\frac{\|q\|_{L_{1}(\bar{\D})} \cdot \Gamma((d+1) / 2+1)}{l_{0} R_{\D} \cdot \pi^{(d+1) / 2}}\right)^{\frac{1}{d+2}}\right\} .  
\end{equation}
\end{lemma}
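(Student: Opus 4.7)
The plan is a standard local-to-global argument: use the Lipschitz continuity to propagate the pointwise supremum into a ball of definite size, then use the regularity constant $R_{\D}$ to convert that ball into an integral lower bound, and finally solve the resulting inequality for $M := \sup_{\D}|q|$. Concretely, I would pick a sequence $(x_n, T_n)\in\bar{\D}$ with $|q(x_n,T_n)| \to M$, and for any $r>0$ observe that by the $l_0$-Lipschitz property, $|q(y)| \geq M - l_0 r$ on $B(x_n,T_n,r)$ (taking $n$ large and, if necessary, replacing $q$ by $-q$ so that $q(x_n,T_n)$ has a fixed sign). The natural radius to choose is $r = M/(2l_0)$, which ensures $|q|\geq M/2$ on the entire ball.

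Integrating this bound over $B(x_n,T_n,r)\cap \D$ and invoking the definition of $R_{\D}$ gives
\begin{equation*}
\|q\|_{L_1(\bar{\D})} \;\geq\; \frac{M}{2}\,|B(x_n,T_n,r)\cap \D| \;\geq\; \frac{M}{2}\, R_{\D}\,\min\!\left\{|\D|,\;\frac{\pi^{(d+1)/2} r^{d+1}}{\Gamma((d+1)/2+1)}\right\}.
\end{equation*}
From here I would split on which argument of the $\min$ is smaller. In the case $|\D|$ is smaller, the ball essentially covers $\D$ and one gets $M \leq 2\|q\|_{L_1}/(R_{\D}|\D|)$, matching the first argument of the max. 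In the case the ball volume is smaller, substituting $r = M/(2l_0)$ gives
\begin{equation*}
\|q\|_{L_1(\bar{\D})} \;\geq\; \frac{M^{d+2}\, R_{\D}\, \pi^{(d+1)/2}}{2^{d+2}\, l_0^{d+1}\, \Gamma((d+1)/2+1)},
\end{equation*}
and solving for $M$ yields
\begin{equation*}
M \;\leq\; 2 l_0^{(d+1)/(d+2)} \left(\frac{\|q\|_{L_1(\bar{\D})}\,\Gamma((d+1)/2+1)}{R_{\D}\,\pi^{(d+1)/2}}\right)^{\!1/(d+2)},
\end{equation*}
which is exactly the second argument of the max (after pulling an $l_0^{1/(d+2)}$ inside the parenthesis). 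Taking the maximum of the two case bounds completes the proof.

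The main obstacle, though only a bookkeeping one, is the choice of $r$. Strictly $r = M/(2l_0)$ is the boundary case where $|q|\geq M/2$, so I would either argue with $r = (M-\varepsilon)/(2l_0)$ and let $\varepsilon\downarrow 0$, or work with an almost-supremum point $(x_n,T_n)$ with $|q(x_n,T_n)| \geq M - 1/n$ so that $|q|\geq (M-1/n)/2$ holds strictly on the open ball. A secondary subtlety is that if $|q|$ is not attained, the supremum might live on $\partial \D$; but since $q\in C^0(\bar \D)$ and $\bar{\D}$ is compact (by boundedness) the supremum is attained, and the definition of $R_{\D}$ is written for points of $\D$ rather than $\bar{\D}$, so a small interior approximation argument handles the boundary case. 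Aside from these technicalities, every step is a direct consequence of Lipschitz continuity, the regularity assumption, and elementary algebra.
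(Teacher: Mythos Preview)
Your argument is correct and follows essentially the same route as the paper's proof: both pick a point (or approximating sequence) where $|q|$ is near its supremum $M$, take the ball of radius $r=M/(2l_0)$, use Lipschitz continuity to get $|q|\geq M/2$ on $B\cap\D$, apply the regularity lower bound $|B\cap\D|\geq R_{\D}\min\{|\D|,|B|\}$, and split on the two arguments of the $\min$. You are in fact slightly more careful than the paper about the attainment of the supremum and the possibility that the maximizer lies on $\partial\D$; the paper simply takes $(\bar x,\bar T)\in\arg\max_{\bar\D}|q|$ without comment.
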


\begin{proof}
(Lemma~\ref{lem:Lipshitz_bound})
According to the definition of $l$-Lipschitz continuity, we have
\begin{equation}
    l\|(x,T)-(\bar{x}, \bar{T})\|_2 \geq|q(x,T)-q(\bar{x}, \bar{T})|, \quad \forall (x,T), (\bar{x}, \bar{T}) \in \bar{\D},
\end{equation}
which follows
\begin{equation}
\label{eq:L1_bound}
    \|q\|_{L_{1}(\bar{\D})} \geq \int_{\D^{+}}|q(x, T)| d x dT\geq \int_{\D^{+}}|f(\bar{x}, \bar{T})|-l\|(x,T)-(\bar{x}, \bar{T})\| d x dT,
\end{equation}
where $\D^{+}:=\{(x,T) \in \bar{\D} \mid |q(\bar{x},\bar{T})| -l\|(x,T)-(\bar{x}, \bar{T})\| \geq 0\}$. Without loss of generality, we assume that $(\bar{x}, \bar{T}) \in \arg \max _{\bar{\D}}|q|$ and $q(\bar{x}, \bar{T})>0$. Denote that
\begin{equation}
    B_{1}:=B\left(\bar{x}, \bar{T}, \frac{q(\bar{x}, \bar{T})}{2 l}\right) \cap \D.
\end{equation}
It obvious that $B_{1} \subset \D^{+}$. Note that the Lebesgue measure of a hypersphere in $\mathbb{R}^{d+1}$ with radius $r$ is $\pi^{(d+1) / 2} r^{d+1} / \Gamma((d+1) / 2+1)$. Then~\eqref{eq:L1_bound} becomes
\begin{equation}
\begin{aligned}
    \|q\|_{L_{1}(\bar{\D})} & \geq \int_{B_{1}} q(\bar{x})-l\|(x,T)-(\bar{x}, \bar{T})\| d x dT\\
    & \geq\left|B_{1}\right| \cdot \frac{q(\bar{x}, \bar{T})}{2} \\
    & \geq \frac{q(\bar{x}, \bar{T})}{2} \cdot R_{\D} \cdot \min \left\{|\D|, \frac{\pi^{((d+1) / 2} q(\bar{x}, \bar{T})^{d+1}}{2^{d+1} l^{d+1} \Gamma((d+1) / 2+1)}\right\} \\
    & = \sup_{D} |q| \cdot \frac{R_{\D}}{2} \cdot \min \left\{|\D|, \frac{\pi^{((d+1) / 2} q(\bar{x}, \bar{T})^{d+1}}{2^{d+1} l^{d+1} \Gamma((d+1) / 2+1)}\right\}
\end{aligned}
\end{equation}
which leads to~\eqref{eq:Lipshitz_bound}.
\end{proof} 

Now we are ready to prove Theorem~\ref{thm:full_pde_constraint}.

\begin{proof}
(Theorem~\ref{thm:full_pde_constraint})
From condition 1, $\mathbf{Y}$ is uniformly sampled from $\Sigma$. From condition 2, $\mathbf{X}$ is uniformly sampled from $\D$. Then we have
\begin{equation}
    \mathbb{E}_{\mathbf{Y}}\left[|{F}_{\boldsymbol\theta}(\mathbf{Y})-\Tilde{u}(\mathbf{Y})|\right] = 
    \| {F}_{\boldsymbol\theta}(x, T)-\Tilde{u}(x, T)\|_{L_1(\Sigma)}/|\Sigma|,
\end{equation}
\begin{equation}
    \mathbb{E}_{\mathbf{X}}\left[|W_{{F}_{\boldsymbol\theta}}(\mathbf{X})|\right] = 
    \| W_{{F}_{\boldsymbol\theta}}(x, T)\|_{L_1(\D)}/|\D|.
\end{equation}
From condition 1 and 2 we know that
\begin{equation}
     \| {F}_{\boldsymbol\theta}(x, T)-\Tilde{u}(x, T)\|_{L_1(\Sigma)} < \delta_1 |\Sigma|
\end{equation}
\begin{equation}
      \| W_{{F}_{\boldsymbol\theta}}(x, T)\|_{L_1(\D)} < \delta_2 |\D|
\end{equation}
Also from condition 3 we have that $F_{\boldsymbol\theta}-\tilde{u}$ and $W_{F_{\boldsymbol\theta}}$ are both $l$-Lipschitz continuous on $\bar{\D}$. 
From Lemma~\ref{lem:Lipshitz_bound} we know that
\begin{equation}
\begin{aligned}
    & \quad \sup _{(x, T) \in \Sigma}| {F}_{\boldsymbol\theta}(x, T)-\Tilde{u}(x, T)|  \\
    & \leq \max \left\{\frac{2\| {F}_{\boldsymbol\theta}(x, T)-\Tilde{u}(x, T)\|_{L_1(\Sigma)}}{R_{\Sigma}|\Sigma|}, 2 l \cdot\left(\frac{\| {F}_{\boldsymbol\theta}(x, T)-\Tilde{u}(x, T)\|_{L_1(\Sigma)} \cdot \Gamma(d / 2+1)}{l R_{\Sigma} \cdot \pi^{d / 2}}\right)^{\frac{1}{d+1}}\right\} \\
    & < \max \left\{\frac{2 \delta_1 |\Sigma| }{R_{\Sigma}|\Sigma|}, 2 l \cdot\left(\frac{\delta_1 |\Sigma| \cdot \Gamma(d / 2+1)}{l R_{\Sigma} \cdot \pi^{d / 2}}\right)^{\frac{1}{d+1}}\right\}, \\
\end{aligned} 
\end{equation}
\begin{equation}
\begin{aligned}
    & \quad \sup _{(x, T) \in \D}|W_{{F}_{\boldsymbol\theta}}(x,T)| \\
    & \leq \max \left\{\frac{2\| W_{{F}_{\boldsymbol\theta}}(x, T)\|_{L_1(\D)}}{R_{\D}|\D|}, 2 l \cdot\left(\frac{\| W_{{F}_{\boldsymbol\theta}}(x, T)\|_{L_1(\D)} \cdot \Gamma((d+1) / 2+1)}{l R_{\D} \cdot \pi^{(d+1) / 2}}\right)^{\frac{1}{d+2}}\right\} \\
    & < \max \left\{\frac{2\delta_2 |\D|}{R_{\D}|\D|}, 2 l \cdot\left(\frac{\delta_2 |\D| \cdot \Gamma((d+1) / 2+1)}{l R_{\D} \cdot \pi^{(d+1) / 2}}\right)^{\frac{1}{d+2}}\right\}. \\
\end{aligned}
\end{equation}
Let
\begin{equation}
\begin{aligned}
    \tilde \delta_1 & = \max \left\{\frac{2 \delta_1 |\Sigma| }{R_{\Sigma}|\Sigma|}, 2 l \cdot\left(\frac{\delta_1 |\Sigma| \cdot \Gamma(d / 2+1)}{l R_{\Sigma} \cdot \pi^{d / 2}}\right)^{\frac{1}{d+1}}\right\}, \\
    \tilde \delta_2 & = \max \left\{\frac{2\delta_2 |\D|}{R_{\D}|\D|}, 2 l \cdot\left(\frac{\delta_2 |\D| \cdot \Gamma((d+1) / 2+1)}{l R_{\D} \cdot \pi^{(d+1) / 2}}\right)^{\frac{1}{d+2}}\right\}.
\end{aligned}
\end{equation}
Then from Theorem~\ref{thm:PINN_bound} we know that 
\begin{equation}
    \sup _{(x,T) \in \D}\left|{F}_{\boldsymbol\theta}(x, T)-u(x, T)\right| \leq \tilde\delta_1 + C \frac{\tilde \delta_2}{\lambda}.
\end{equation}
Given that $\lambda = \frac{1}{2}\sigma^2$, replace $C$ with $2C$ we get
\begin{equation}
    \sup _{(x,T) \in \D}\left|{F}_{\boldsymbol\theta}(x, T)-u(x, T)\right| \leq \tilde\delta_1 + C \frac{\tilde \delta_2}{\sigma^2},
\end{equation}
which completes the proof.

\end{proof}

\section{Additional Theorems}
\begin{theorem}
\label{thm:PINN_worst_case_bound}
Suppose that $\D \in \mathbb{R}^{d+1}$ is a bounded domain and $u \in C^0(\bar{\D}) \cap C^2(\D)$ is a solution to the PDE of interest. 
Let $\Sigma_s$ be a strict sub-region in $\Sigma$, and $\D_s$ be a sub-region in $\D$.
Given the PINN ${F}_{\boldsymbol\theta}$ parameterized by $\boldsymbol\theta$ satisfies the following conditions: 
\begin{enumerate}
    \item 
    $\sup _{(x, T) \in  \Sigma_s}|{F}_{\boldsymbol\theta}(x, T)-u(x, T)|<\delta_1$
    \item 
    $\sup _{(x, T) \in \D_s}|W_{{F}_{\boldsymbol\theta}}(x, T)|<\delta_2$
    \item 
    $\sup _{(x, T) \in \D_s}|{F}_{\boldsymbol\theta}(x, T)-{u}(x, T) |<\delta_3$
\end{enumerate}
With ${F}_{\boldsymbol\theta}$ has sufficient representation, $\forall M > 0$, there exist an trained instance ${F}_{\Bar{\boldsymbol\theta}}$ such that 
\begin{equation}
\label{eq:PINN_worst_case_bound}
    \sup _{(x, T) \in \D}\left|{F}_{\Bar{\boldsymbol\theta}}(x, T)-u(x, T)\right| \geq M.
\end{equation}
\end{theorem}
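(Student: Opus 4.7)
The plan is to adapt the construction used in Corollary~\ref{cor:NN_worst_case_bound}, but this time arrange the auxiliary target function so that it agrees with $u$ in a strong (up to second derivatives) sense on $\D_s \cup \Sigma_s$, while still deviating from $u$ by at least $M$ somewhere in $\D$. Since $\Sigma_s$ is a strict sub-region of $\Sigma$ and (implicitly) $\D_s$ is a strict sub-region of $\D$, there is an open set $U \subset \D \setminus (\D_s \cup \Sigma_s)$ of positive measure in which we are free to perturb. First, I would pick a point $(x^\ast, T^\ast) \in U$ and a standard smooth bump $\varphi \in C_c^\infty(\D)$ supported in $U$ with $\varphi(x^\ast, T^\ast) = 1$, and define
\begin{equation}
\bar u(x, T) := u(x, T) + (M + \delta)\,\varphi(x, T),
\end{equation}
for any fixed $0 < \delta < \min\{\delta_1, \delta_2, \delta_3\}/K$, where $K$ is a constant chosen below. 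Because $\varphi$ and all its derivatives vanish on a neighborhood of $\D_s \cup \Sigma_s$, we have $\bar u \equiv u$ there, and in particular $L[\bar u] = L[u] = q$ so $W_{\bar u} = 0$ on $\D_s$. Meanwhile $\sup_{\D} |\bar u - u| = M + \delta$, attained at $(x^\ast, T^\ast)$.

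Next, I would invoke a $C^2$ version of the universal approximation theorem (e.g., Hornik's extension for smooth activations such as $\tanh$, which the paper uses). This gives, for any $\eta > 0$, a parameter vector $\bar{\boldsymbol\theta}$ with
\begin{equation}
\|F_{\bar{\boldsymbol\theta}} - \bar u\|_{C^2(\bar\D)} \leq \eta.
\end{equation}
Since the parabolic operator $L$ and hence $W$ are linear combinations of the unknown and its first and second derivatives with bounded coefficients on $\bar \D$, there is a constant $K$ depending only on $L$ such that $\sup_{\D}|W_{F_{\bar{\boldsymbol\theta}}} - W_{\bar u}| \leq K \eta$. Choosing $\eta$ small enough so that $K \eta < \min\{\delta_1, \delta_2, \delta_3\}$ and also $\eta < \delta$ makes all three hypotheses hold: on $\Sigma_s$ and $\D_s$ the target $\bar u$ equals $u$, so the $C^0$ piece of the approximation gives conditions~1 and~3, while condition~2 follows from $W_{\bar u}=0$ on $\D_s$ together with the $C^2$ closeness.

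Finally I would conclude via a reverse triangle inequality evaluated at $(x^\ast, T^\ast)$:
\begin{equation}
\sup_{(x, T) \in \D} |F_{\bar{\boldsymbol\theta}}(x, T) - u(x, T)| \;\geq\; |\bar u(x^\ast, T^\ast) - u(x^\ast, T^\ast)| - |F_{\bar{\boldsymbol\theta}}(x^\ast, T^\ast) - \bar u(x^\ast, T^\ast)| \;\geq\; (M + \delta) - \eta \;\geq\; M,
\end{equation}
giving the claim.

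The main obstacle, relative to the corollary, is that condition~2 now constrains $W_{F_{\bar{\boldsymbol\theta}}}$, which involves second derivatives of the network output, so the naive $C^0$ universal approximation used in Corollary~\ref{cor:NN_worst_case_bound} is not strong enough. The fix is to appeal to universal approximation in $C^2$ (valid since $\tanh$ is smooth) and design $\bar u$ so that it agrees with $u$ identically, not merely approximately, on $\D_s$; together these ensure $W_{\bar u} \equiv 0$ on $\D_s$, and the $C^2$-norm control on $F_{\bar{\boldsymbol\theta}} - \bar u$ is what propagates this to the PDE-residual bound.
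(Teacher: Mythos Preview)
Your argument is correct and, in fact, more careful than the paper's own proof. Both proofs share the same skeleton: build an auxiliary target $\bar u$ that agrees with $u$ on $\D_s$ (and hence on $\Sigma_s$) but deviates by $M+\delta$ somewhere in $\D\setminus\D_s$, then invoke universal approximation to realize $\bar u$ by some $F_{\bar{\boldsymbol\theta}}$. The difference lies in how condition~2 is handled. The paper perturbs $u$ by a distance-to-$\D_s$ term, $\bar u = u + \tfrac{d}{d_{\max}}(M+\delta)$ on $\D\setminus\D_s$, which is only Lipschitz across $\partial\D_s$, and then cites the \emph{$C^0$} universal approximation theorem (Cybenko) while simply asserting that one can simultaneously achieve $\sup_{\D_s}|W_{F_{\bar{\boldsymbol\theta}}}|<\delta_2$; no mechanism is given for why $C^0$ closeness to $\bar u$ should control the second-order operator $W$. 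Your construction replaces the distance term by a compactly supported smooth bump, so that $\bar u\in C^2$ with $W_{\bar u}\equiv 0$ on $\D_s$, and then appeals to $C^2$ universal approximation (Hornik, valid for $\tanh$), which cleanly propagates the bound to $W_{F_{\bar{\boldsymbol\theta}}}$. What the paper's version buys is brevity; what yours buys is an actual justification of condition~2.

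Two minor cleanups: the constraint $\delta<\min\{\delta_1,\delta_2,\delta_3\}/K$ is not needed---any fixed $\delta>0$ suffices, since $\eta$ is chosen afterwards---and the requirement on $\eta$ should be stated as $\eta<\min\{\delta_1,\delta_3,\delta\}$ together with $K\eta<\delta_2$, rather than the single inequality $K\eta<\min\{\delta_1,\delta_2,\delta_3\}$.
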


\begin{proof}(Theorem~\ref{thm:PINN_worst_case_bound})
We know that $u$ is the solution to the PDE of interest. 
We can construct $\Bar{u}$ such that
\begin{equation}
    \bar{u} = \begin{cases}
    u, \quad (x,T) \in \D_s \\
    u + \frac{d}{d_{\text{max}}}(M+\delta), \quad (x,T) \in \Omega \times \tau \backslash \D_s
    \end{cases}
\end{equation}
where $d$ characterizes the distance between $(x,t) \in \Omega \backslash \D_s$ and set $\D_s$, $d_{\text{max}} = \max_{(x,t) \in \Omega \backslash \D_s} d$ is the maximum distance, and $0 < \delta < \min\{\delta_1, \delta_3\}$ is a constant. Then we have
\begin{equation}
    \sup_{(x, T) \in \Omega \times \tau} \left|\bar{u}(x, t)-u(x, t)\right| = M + \delta
\end{equation}
Since we assume the neural network has sufficient representation power, by universal approximation theorem~\citep{cybenko1989approximation}, for $\delta$ given above, there exist $\Bar{\boldsymbol\theta}$ such that $\sup _{(x, T) \in \D_s}|W_{F_{\Bar{\boldsymbol\theta}}}(x, T)|<\delta_2$ and
\begin{equation}
    \sup _{(x, T) \in \Omega \times \tau}\left|F_{\Bar{\boldsymbol\theta}}(x, t)-\bar{u}(x, t)\right| \leq \delta.
\end{equation}
Then we have $\Bar{\boldsymbol\theta}$ satisfies the three conditions, and 
\begin{equation}
    \sup _{(x, T) \in \Omega \times \tau}\left|F_{\Bar{\boldsymbol\theta}}(x, t)-u(x, t)\right| \geq M.
\end{equation}

\end{proof}

Theorem~\ref{thm:PINN_worst_case_bound} says that when we only impose data and PDE constraints on a sub-region in the space-time domain, the learned PINN can perform arbitrarily inaccurate due to insufficient physical constraints. One special case of this setting is $\mathcal{P} = \emptyset$, which is the typical configuration for standard neural networks without physical constraints (Corollary~\ref{cor:NN_worst_case_bound}).
We also point out that Theorem~\ref{thm:PINN_worst_case_bound} is a worst-case result. In practice, one may observe that the PINN performs well on the full space-time domain even with constraints imposed on sub-region.

\section{Experiment Details}
In this section, we provide details for the four experiments presented in the paper.

\subsection{Generalization to unseen regions}
In the generalization task in section~\ref{sec:generalization}, we use a down-sampled sub-region of the system to train the proposed PIPE framework, and test the prediction results on the full state-time space. We showed that PIPE is able to give accurate inference on the entire state space, while standard fitting strategies cannot make accurate predictions. In the paper, we only show the fitting results of thin plate spline interpolation. Here, we show the results of all fitting strategies we tested for this generalization tasks. The fitting strategies are
\begin{enumerate}
    \item Polynomial fitting of 5 degrees for both state $x$ and time $T$ axes. The fitting sum of squares error (SSE) on the training data is $0.1803$.
    \item Lowess: locally weighted scatterplot smoothing~\citep{cleveland1981lowess}. The training SSE is $0.0205$.
    \item Cubic spline interpolation. The training SSE is $0$.
    \item Biharmonic spline interpolation. The training SSE is $2.52\times 10^{-27}$.
    \item TPS: thin plate spline interpolation. The training SSE is $1.64 \times 10^{-26}$.
\end{enumerate}
All fittings are conducted via the MATLAB Curve Fitting Toolbox.
Fig.~\ref{fig:fitting} visualizes the fitting results on the full state space. Polynomial fitting performs undesirably because the polynomial functions cannot capture the risk probability geometry well. Lowess fitting also fails at inference since it does not have any model information of the data. Given the risk probability data, cubic spline cannot extrapolate outside the training region, and we use $0$ value to fill in the unseen region where it yields NAN for prediction. Biharmonic and TPS give similar results as they are both spline interpolation methods. None of these fitting methods can accurately predict the risk probability in unseen regions, because they purely rely on training data and do not incorporate any model information of the risk probability for prediction.

\begin{figure}[h]
    \centering
    \includegraphics[width=11cm]{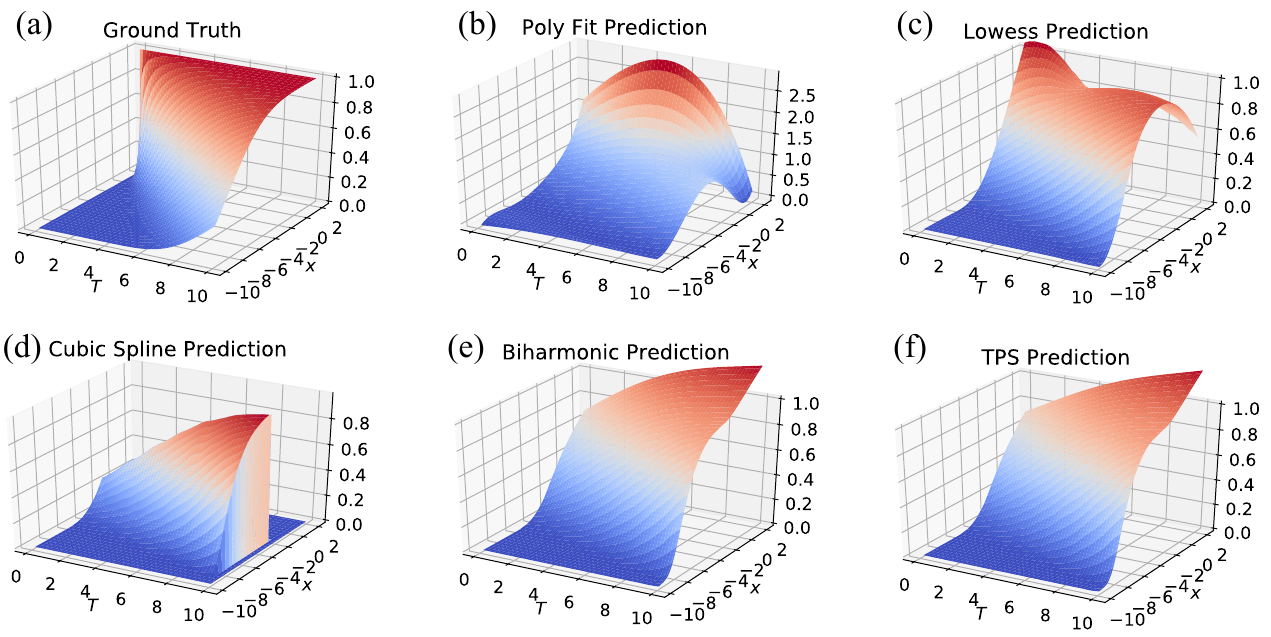}
    \caption{Results of different fitting strategies on the risk probability generalization task.}
    \label{fig:fitting}
\end{figure}

We also compare the prediction results for different network architectures in the proposed PIPE framework, to examine the effect of network architectures on the risk probability prediction performance.
The network settings we consider are different hidden layer numbers (1-4) and different numbers of neurons in one hidden layer (16, 32, 64). We use 3 hidden layers, 32 neurons per layer as baseline (the one used in the paper). Table~\ref{tb:prediction layer number} and Table~\ref{tb:prediction neuron number} report the averaged absolute error of the predictions for different layer numbers and neuron numbers per layer, respectively. We trained the neural networks 10 times with random initialization to report the standard deviation.
We can see that as the number of layers increases, the prediction error of the risk probability drops, but in a relatively graceful manner. The prediction error for a single layer PIPE is already very small, which indicates that the proposed PIPE framework is very efficient in terms of computation and storage. The prediction accuracy tends to saturate when the hidden layer number reaches 3, as there is no obvious improvement when we increase the layer number from 3 to 4. This means for the specific task we consider, a 3-layer neural net has enough representation.
Under the same layer number, as the neuron number per layer increases, the risk probability prediction error decreases. This indicates that with larger number of neuron in each layer (\ie wider neural networks), the neural network can better capture the mapping between state-time pair and the risk probability. However, the training time increases significantly for PIPEs with more neurons per layer ($152\mathrm{s}$ for 16 neurons and $971\mathrm{s}$ for 64 neurons), and the gain in prediction accuracy becomes marginal compared to the amount of additional computation involved. We suggest to use a moderate number of neurons per layer to achieve desirable trade-offs between computation and accuracy.

\begin{table*}[h]
\centering
\begin{tabular}{l|cccc}
    \hline
    \textbf{\# Hidden Layer} & 1 & 2 & 3 & 4 \\
    \hline
    \textbf{Prediction Error ($\times 10^{-3}$)} & $4.773\pm 0.564$ & $\boldsymbol{2.717 \pm 0.241}$ & $2.819 \pm 0.619$ & $2.778 \pm 0.523$ \\
    \hline
\end{tabular}
\caption{Risk probability prediction error of PIPE for different numbers of hidden layers.}
\label{tb:prediction layer number}
\end{table*}

\begin{table*}[h]
\centering
\begin{tabular}{l|ccc}
    \hline
    \textbf{\# Neurons} & 16 & 32 & 64 \\
    \hline
    \textbf{Prediction Error ($\times 10^{-3}$)} & $2.743 \pm 0.313$ & $2.931 \pm 0.865$ & $\boldsymbol{2.599 \pm 0.351}$ \\
    \hline
\end{tabular}
\caption{Risk probability prediction error of PIPE for different neuron numbers per layer.}
\label{tb:prediction neuron number}
\end{table*}

\subsection{Efficient estimation of risk probability}
In the efficient estimation task in section~\ref{sec:estimation}, we showed that PIPE will give better sample efficiency in risk probability prediction, in the sense that it achieves the same prediction accuracy with less sample numbers. Here, we visualize the prediction errors of Monte Carlo (MC) and the proposed PIPE framework to better show the results. Fig.~\ref{fig:comparison} shows the prediction error comparison plots for MC and PIPE with different sample numbers $N$. As the sample number increases, the errors for both MC and PIPE decrease because of better resolution of the data. PIPE gives 
more accurate predictions than MC across all sample numbers, since it combines data and physical model of the system together. From Fig.~\ref{fig:comparison} we can see that, PIPE indeed provides smoother and more accurate estimation of the risk probability. The visualization results further validate the efficacy of the proposed PIPE framework.
\begin{figure}
    \centering
    \includegraphics[width=15cm]{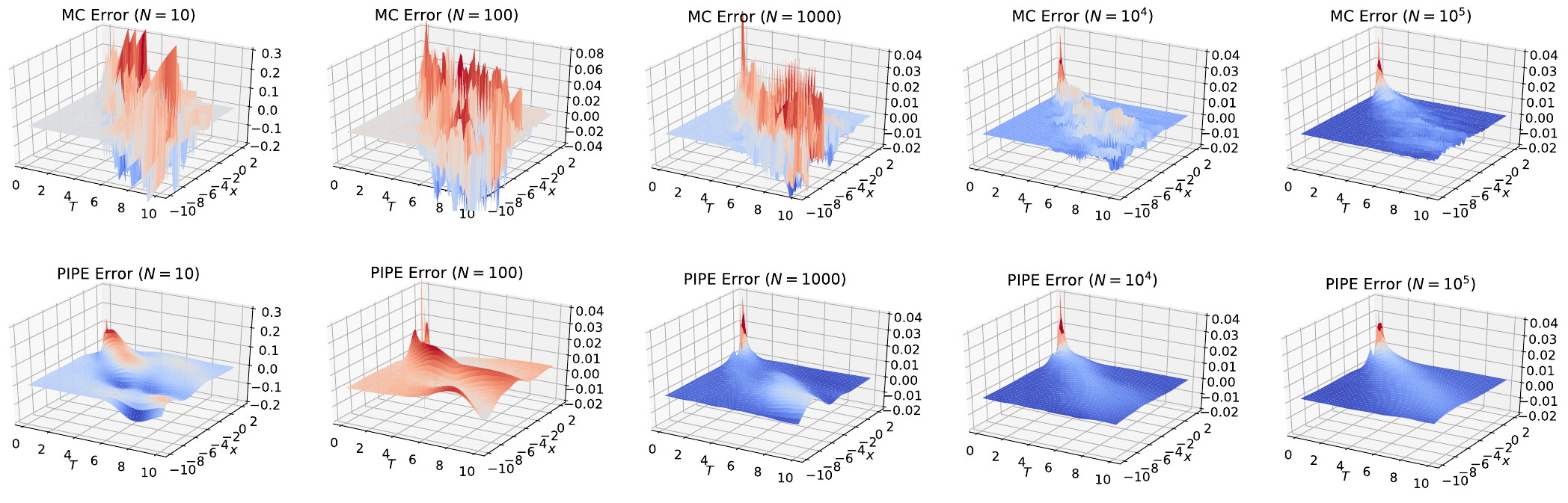}
    \caption{Prediction errors for Monte Carlo (left) and PIPE (right) with different sample numbers.}
    \label{fig:comparison}
\end{figure}

\begin{table*}[h]
\centering
\begin{tabular}{l|cccc}
    \hline
    \textbf{\# Hidden Layer} & 1 & 2 & 3 & 4 \\
    \hline
    \textbf{Prediction Error ($\times 10^{-4}$)} & $14.594 \pm 2.109$ & $7.302 \pm 0.819$ & $6.890 \pm 0.613$ & $\boldsymbol{6.625 \pm 0.574}$ \\
    \hline
\end{tabular}
\caption{Risk probability gradient prediction error of PIPE for different numbers of hidden layers.}
\label{tb:gradient layer number}
\end{table*}

\begin{table*}
\centering
\begin{tabular}{l|ccc}
    \hline
    \textbf{\# Neurons} & 16 & 32 & 64 \\
    \hline
    \textbf{Prediction Error ($\times 10^{-4}$)} & $7.049 \pm 0.767$ & $6.890 \pm 0.613$ & $\boldsymbol{6.458 \pm 0.794}$ \\
    \hline
\end{tabular}
\caption{Risk probability gradient prediction error of PIPE for different neuron numbers per layer.}
\label{tb:gradient neuron number}
\end{table*}

\subsection{Adaptation on changing system parameters}
For the adaptation task described in section~\ref{sec:adaptation}, we trained PIPE with system data of parameters $\lambda_\text{train} = [0.1,0.5,0.8,1]$ and tested over a range of unseen parameters over the interval $\lambda=[0,2]$. Here, we show additional results on parameters $\lambda_\text{test} = [0.3, 0.7, 1.2, 2]$ to further illustrate the adaptation ability of PIPE. Fig.~\ref{fig:varying-new} shows the results. It can be seen that PIPE is able to predict the risk probability accurately on both system parameters with very low error over the entire state-time space. This result indicates that PIPE has solid adaptation ability on uncertain parameters, and can be used for stochastic safe control with adaptation requirements.

\begin{figure}
    \centering
    \includegraphics[width=12cm]{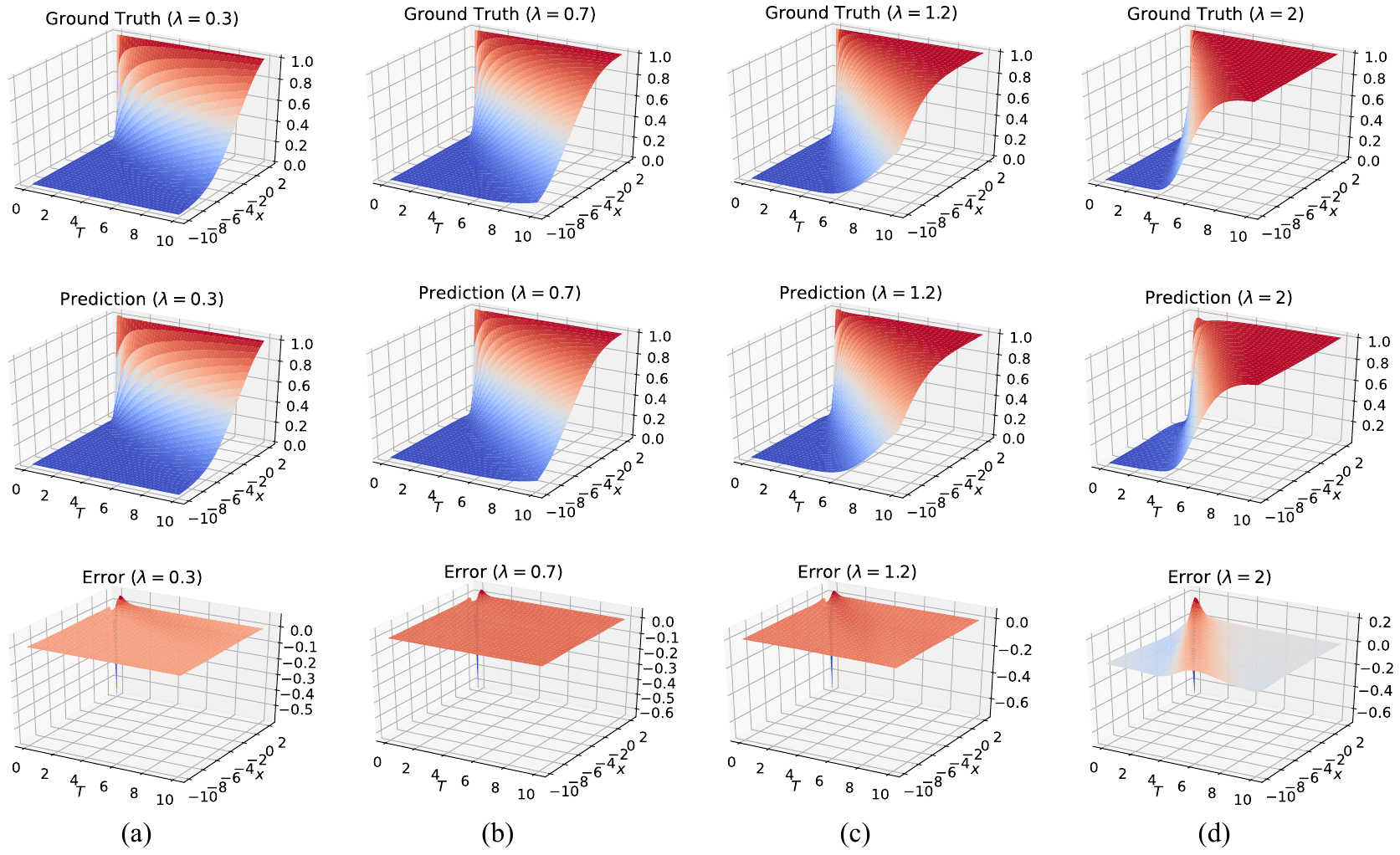}
    \caption{Risk probability prediction results on systems with unseen parameters $\lambda_\text{test} = [0.3, 0.7, 1.2, 2]$.}
    \label{fig:varying-new}
\end{figure}

\subsection{Estimating the gradient of risk probability}
For the gradient estimation task in section~\ref{sec:gradient}, we presented that PIPE is able to predict the risk probability gradients accurately by taking finite difference on the predicted risk probabilities. This result shows that PIPE can be used for first- and higher-order methods for safe control, by providing accurate gradient estimations in a real-time fashion. Similar to the generalization task, here we report the gradient prediction errors with different network architectures in PIPE, to examine the effect of network architectures on the gradient estimation performance. Table~\ref{tb:gradient layer number} and Table~\ref{tb:gradient neuron number} show the averaged absolute error of gradient predictions for different layer numbers and neuron numbers per layer. We trained the neural networks for 10 times with random initialization to report the standard deviation. 
We can see that as the number of hidden layer increases, the gradient prediction error keeps dropping, and tends to saturate after 3 layers.  
With the increasing neuron numbers per layer, the gradient prediction error decreases in a graceful manner. Similar to the generalization task, larger networks with more hidden layers and more neurons per layer can give more accurate estimation of the gradient, but the computation time scales poorly compared to the accuracy gain. Based on these results, we suggest to use moderate numbers of layers and neurons per layer to acquire desirable gradient prediction with less computation time.

\section{Additional Experiment Results}
In this section, we provide additional experiment results on risk estimation of the inverted pendulum on a cart system, as well as safe control with risk estimation through the proposed PIPE framework.

\subsection{Risk estimation of inverted pendulum on a cart system}
We consider the inverted pendulum on a cart system, with dynamics given by
\begin{equation}
\label{eq:pendulum_dynamics}
    \frac{d\mathbf{x}}{dt} = f(\mathbf{x}) + g(\mathbf{x}) u + \sigma \ \Tilde{I} \ dW_t,
\end{equation}
where $\mathbf{x} = [x, \dot x, \theta, \dot \theta]^\top$ is the state of the system and $u \in \mathbb{R}$ is the control, with $x$ and $\dot x$ being the position and velocity of the cart, and $\theta$ and $\dot \theta$ being the angle and angular velocity of the pendulum. We use $\mathbf{x}$ to denote the state of the system to distinguish from cart's position $x$. Then, we have
\begin{equation}
f(\mathbf{x}) = \left[\begin{array}{cccc}
1 & 0 & 0 & 0 \\
0 & m+M & 0 & m l \cos \theta \\
0 & 0 & 1 & 0 \\
0 & m l \cos \theta & 0 & m l^2 
\end{array}\right]^{-1} \left[\begin{array}{c}
\dot{x} \\
m l \dot{\theta}^2 \sin \theta-b_x \dot{x} \\
\dot{\theta} \\
m g l \sin \theta-b_\theta l \dot{\theta}
\end{array}\right],
\end{equation}
\begin{equation}
g(\mathbf{x}) = \left[\begin{array}{cccc}
1 & 0 & 0 & 0 \\
0 & m+M & 0 & m l \cos \theta \\
0 & 0 & 1 & 0 \\
0 & m l \cos \theta & 0 & m l^2 
\end{array}\right]^{-1} \left[\begin{array}{l}
0 \\
1 \\
0 \\
0
\end{array}\right],
\end{equation}
where $m$ and $M$ are the mass of the pendulum and the cart, $g$ is the acceleration of gravity, $l$ is the length of the pendulum, and $b_x$ and $b_\theta$ are constants. The last term in~\eqref{eq:pendulum_dynamics} is the additive noise, where $W_t$ is $4$-dimensional Wiener process with $W_0 = \mathbf{0}$, $\sigma$ is the magnitude of the noise, and
\begin{equation}
\Tilde{I} = \left[\begin{array}{cccc}
0 & 0 & 0 & 0 \\
0 & 1 & 0 & 0 \\
0 & 0 & 0 & 0 \\
0 & 0 & 0 & 1
\end{array}\right].
\end{equation}
Fig.~\ref{fig:cartpend} visualizes the system.

The safe set is defined in~\eqref{eq:safe set definition} with barrier function $\phi(\mathbf{x}) = 1-(\frac{\mathbf{x}_3}{\pi/3})^2 = 1-(\frac{\theta}{\pi/3})^2$. Essentially the system is safe when the angle of the pendulum is within $[-\pi/3, \pi/3]$.
\begin{figure}
\includegraphics[width=6cm]{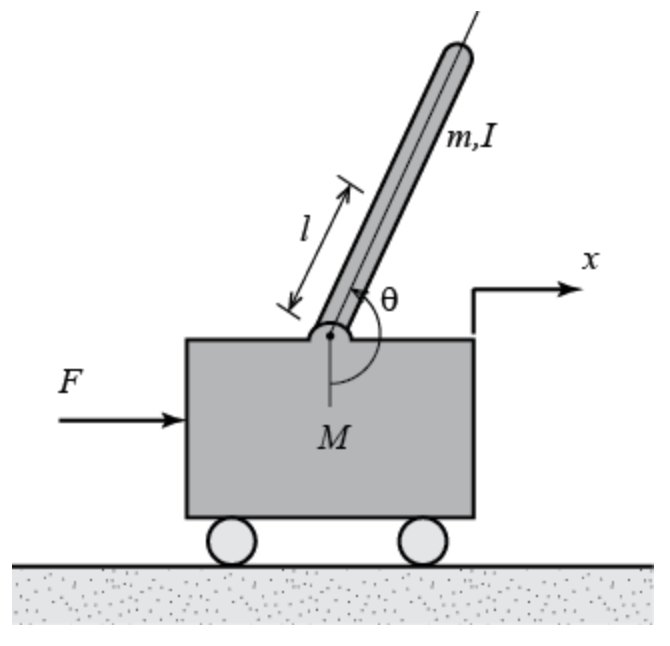}
\centering
\caption{Inverted pendulum on a cart.}
\label{fig:cartpend}
\end{figure}
We consider spatial-temporal space $\Omega \times \tau = \left[[-10,10] \times [-\pi/3, \pi/3] \times [-\pi, \pi] \right] \times [0,1]$. We collect training and testing data on $\Omega \times \tau$ with grid size $N_{\Omega\text{-train}}=13$ and $N_{\tau\text{-train}} = 10$ for training and $N_{\Omega\text{-test}}=25$ and $N_{\tau\text{-test}} = 10$ for testing. The nominal controller we choose is a proportional controller $N(\mathbf{x}) = -K \mathbf{x}$ with $K = [0, -0.9148, -22.1636, -14.3992]^\top$.
The sample number for MC simulation is set to be $N = 1000$ for both training and testing.
Table~\ref{tb:pendulum_parameters} lists the parameters used in the simulation.

We train PIPE with the same configuration listed in section~\ref{sec:experiments}. 
According to Theorem~\ref{thm:cdc pde}, the PDE that characterizes the safety probability of the pendulum system is
\begin{equation}
\label{eq:pendulum_pde}
\begin{aligned}
    \frac{\partial F}{\partial T}(\mathbf{x}, T) 
    & = \left(f(\mathbf{x}) - g(\mathbf{x}) K \mathbf{x} \right) \frac{\partial F}{\partial \mathbf{x}}(\mathbf{x}, T) + \frac{1}{2} \sigma^2 \Tilde{I} \operatorname{tr}\left( \frac{\partial^{2} F}{\partial \mathbf{x}^{2}}(\mathbf{x}, T) \right),
\end{aligned}
\end{equation}
which is a high dimensional and highly nonlinear PDE that cannot be solved effectively using standard solvers. Here we can see the advantage of combining data with model and using a learning-based framework to estimate the safety probability.
Fig.~\ref{fig:zoomin}, Fig.~\ref{fig:pendulum_t03} and Fig.~\ref{fig:pendulum_t1} show the results of the PIPE predictions. We see that despite the rather high dimension and nonlinear dynamics of the pendulum system, PIPE is able to predict the safety probability of the system with high accuracy. Besides, since PIPE takes the model knowledge into training loss, the resulting safety probability prediction is smoother thus more reliable than pure MC estimations.

\begin{table*}
\centering
\begin{tabular}{|c|c|}
    \hline
    Parameters & Values \\
    \hline
    $M$ & $1$ \\
    $m$ & $0.1$ \\
    $g$ & $9.8$ \\
    $l$ & $0.5$ \\
    $b_x$ & $0.05$ \\
    $b_\theta$ & $0.1$ \\
    \hline
\end{tabular}
\caption{Parameters used in the inverted pendulum simulation.}
\label{tb:pendulum_parameters}
\end{table*}

\begin{figure}
\includegraphics[width=15cm]{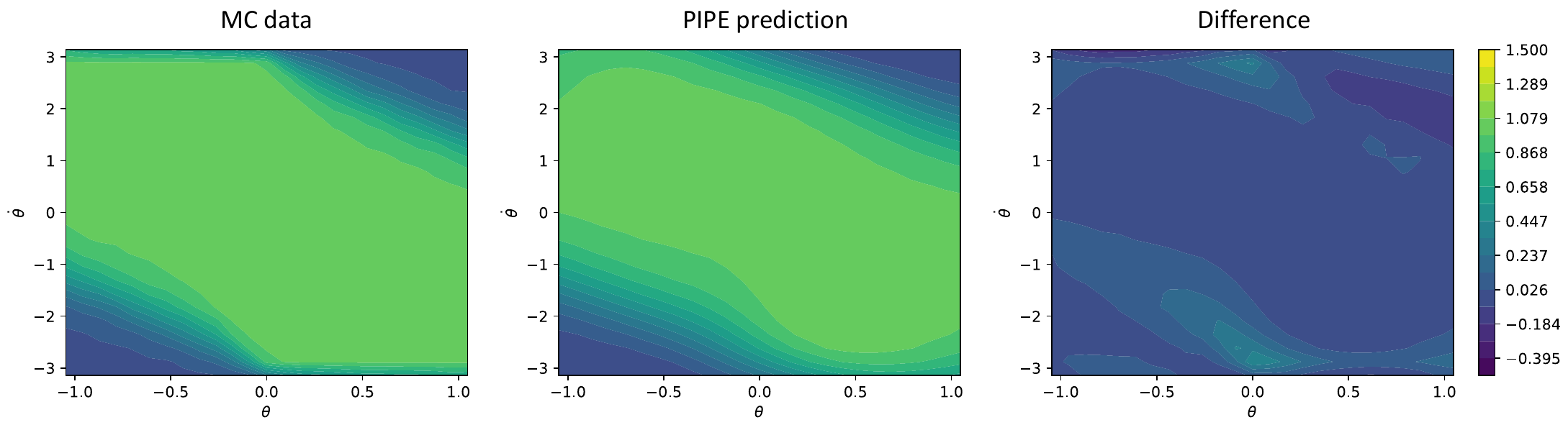}
\centering
\caption{Safety probability from MC simulation and PIPE prediction, and their difference. Results on outlook time horizon $T = 0.6$, initial velocity $v = 0$. The x-axis shows the initial angle, and y-axis shows the initial angular velocity.}
\label{fig:zoomin}
\end{figure}

\begin{figure}
\includegraphics[width=16cm]{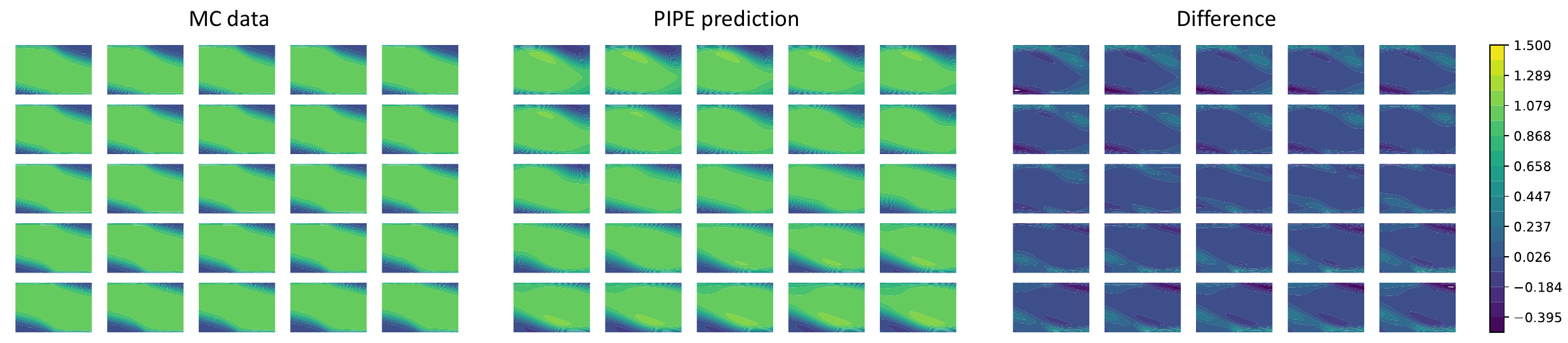}
\centering
\caption{Safety probability from MC simulation and PIPE prediction, and their difference. Results on outlook time horizon $T = 0.3$. The 5x5 plots show results on 25 different initial velocities uniformly sampled in $[-10, 10]$. The x-axis and y-axis (omitted) are the initial angle and the initial angular velocity as in Fig.~\ref{fig:zoomin}. One can see that the safety probability shift as the velocity changes, and the safety probability is symmetric with regard to the origin $v=0$ due to the symmetry of the dynamics.}
\label{fig:pendulum_t03}
\end{figure}

\begin{figure}
\includegraphics[width=16cm]{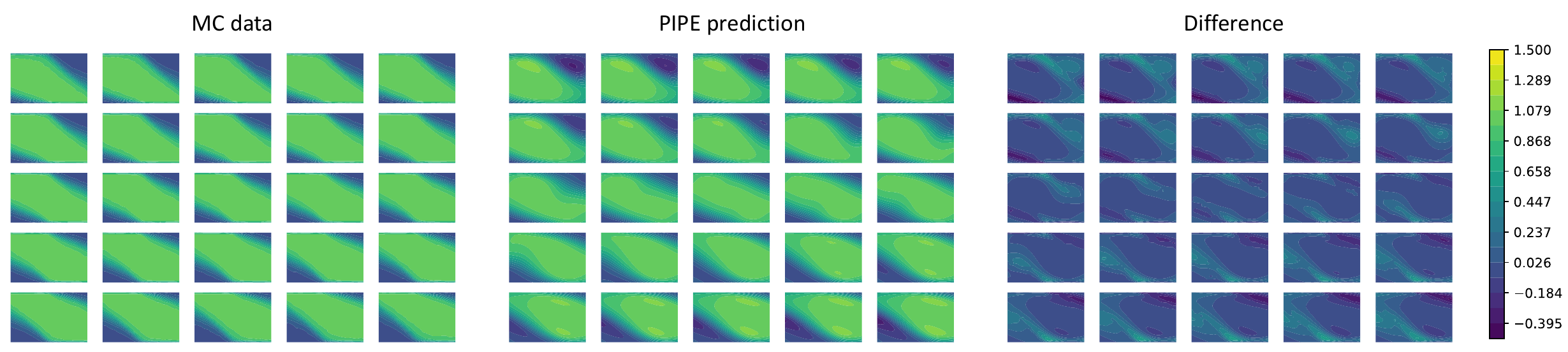}
\centering
\caption{Safety probability from MC simulation and PIPE prediction, and their difference. Results on outlook time horizon $T = 1$. The 5x5 plots show results on 25 different initial velocities uniformly sampled in $[-10, 10]$.}
\label{fig:pendulum_t1}
\end{figure}

\subsection{Safe control with PIPE}
We consider the control affine system~\eqref{eq:cdc system} with $f(x) \equiv Ax = 2x$, $g(x) \equiv 1$, $\sigma(x) \equiv 2$. 
The safe set is defined as in~\eqref{eq:safe set definition} and the barrier function is chosen to be $\phi(x) := x-1$. The safety specification is given as the forward invariance condition. 
The nominal controller is a proportional controller $N(x) = -K x$ with $K = 2.5$. The closed-loop system with this controller has an equilibrium at $x=0$ and tends to move into the unsafe set in the state space. We run simulations with $d t = 0.1$ for all controllers. The initial state is set to be $x_0 = 3$. 
For this system, the safety probability satisfies the following PDE
\begin{equation}
\label{eq:safety pde}
\begin{aligned}
    \frac{\partial F}{\partial T}(x, T) 
    & = -0.5x \frac{\partial F}{\partial x}(x, T) + 2 \operatorname{tr}\left( \frac{\partial^{2} F}{\partial x^{2}}(x, T) \right),
\end{aligned}
\end{equation}
with initial and boundary conditions
\begin{equation}
\begin{aligned}
\label{eq:icbc}
F(x,t) & = 0, \; x \leq 1, \\
F(x,0) & = \mathbbm{1}(x \geq 1).
\end{aligned}
\end{equation}

We first estimate the safety probability $F(x,T)$ of the system via PIPE. 
The training data $\bar{F}(x,T)$ is acquired by running MC on the system dynamics for given initial state $x_0$ and nominal control $N$. Specifically,
\begin{equation}
    \bar{F}(x,T) = \pr(\forall t \in [0,T], x_t \in \mathcal{C} \mid x_0 = x) = \frac{N_\text{safe}}{N},
\end{equation}
with $N=100$ being the number of sample trajectories. The training data is sampled on the state-time region $\Omega \times \tau = [1,10] \times [0,10]$ with grid size $dx = 0.5$ and $dt = 0.5$. 
We train PIPE with the same configuration as listed in section~\ref{sec:experiments}.
We test the estimated safety probability and its gradient on the full state space $\Omega \times \tau$ with $dx = 0.1$ and $dt = 0.1$. Fig.~\ref{fig:PIPE estimation} shows the results. It can be seen that the PIPE estimate is very close to the Monte Carlo samples, which validates the efficacy of the framework. Furthermore, the PIPE estimation has smoother gradients, due to the fact that it leverages model information along with the data.
\begin{figure}
    \centering
    \includegraphics[width=11cm]{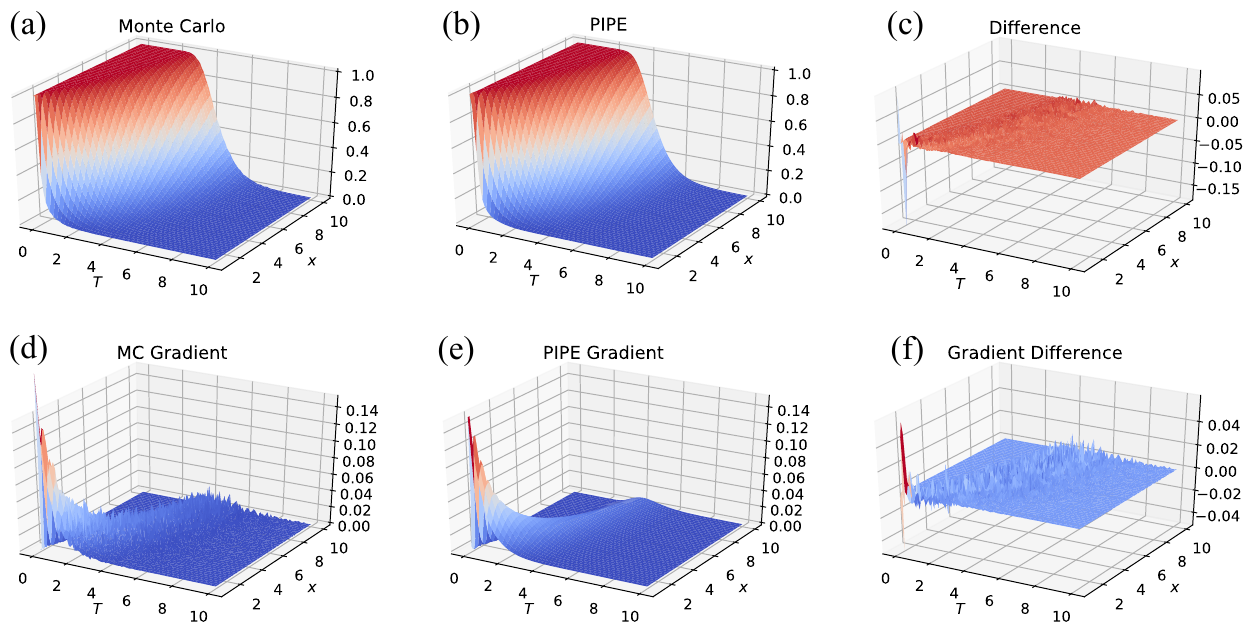}
    \caption{Safety probability and its gradient of Monte Carlo samples and PIPE estimation.}
    \label{fig:PIPE estimation}
\end{figure}

We then show the results of using such estimated safety probability for control. Fig.~\ref{fig:PIPE control} shows the results. For the baseline stochastic safe controllers for comparison, refer to~\cite{wang2021myopically} for details. We see that the PIPE framework can be applied to long-term safe control methods discussed in~\citep{wang2021myopically}. Together with the PIPE estimation, long-term safety can be ensured in stochastic control systems.
\begin{figure}
    \centering
    \includegraphics[width=12cm]{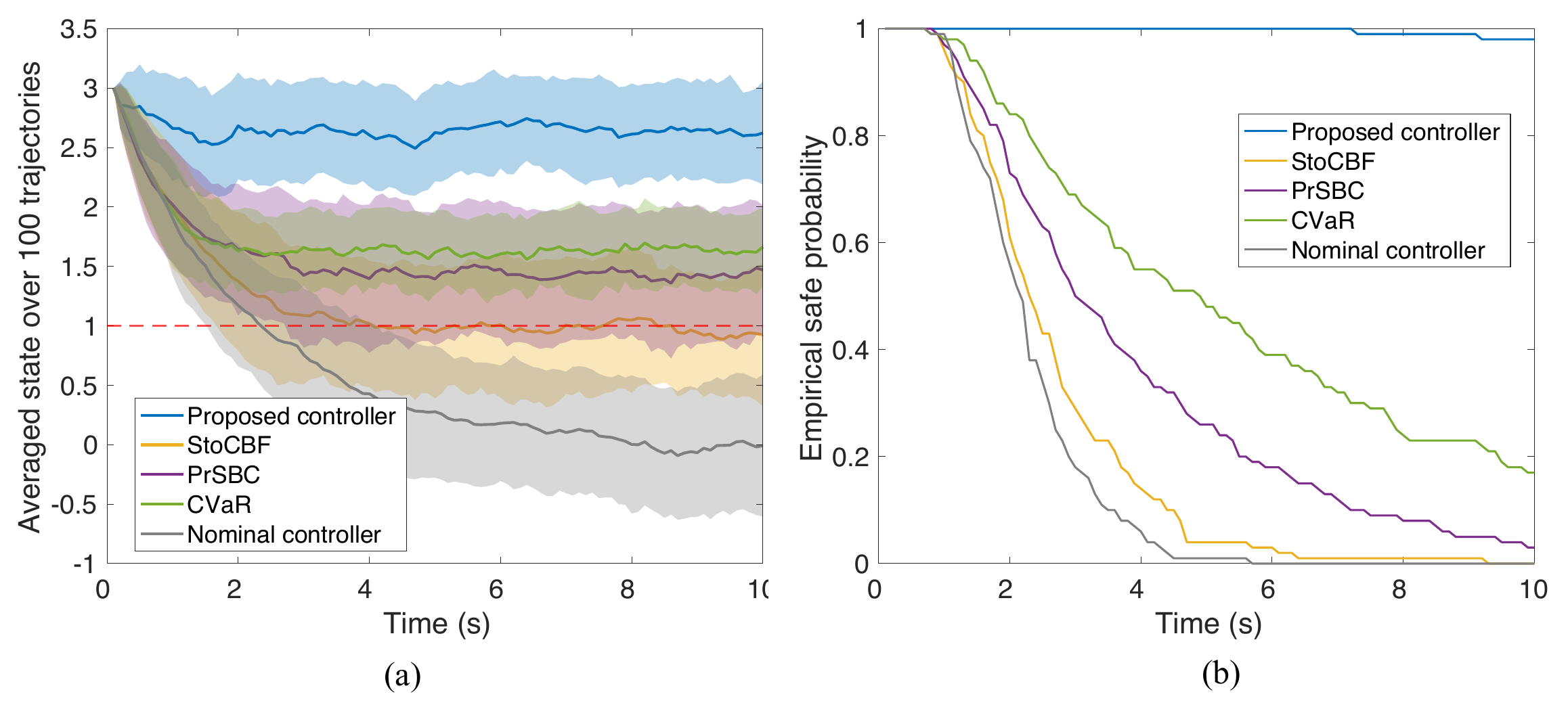}
    \caption{Safe control with probability estimation from PIPE compared with other baselines.}
    \label{fig:PIPE control}
\end{figure}

\end{document}